\newtheorem{theorem}{Theorem}
\newtheorem*{theorem*}{Theorem}
\newtheorem{lemma}{Lemma}
\newtheorem{proposition}{Proposition}
\newtheorem{example}{Example}
\newtheorem{assumption}{Assumption}
\theoremstyle{definition}
\newtheorem{definition}{Definition}
\newtheorem*{definition*}{Definition}
\newtheorem*{lemma*}{Lemma}
\numberwithin{equation}{section}
\DeclareDocumentCommand\Pr{ m g }{\ensuremath{
    {   \IfNoValueTF {#2}
      {\mathbb{P}\mleft[{#1}\mright]}
      {\mathbb{P}\mleft[{#1}\middle\vert{#2}\mright]}%
    }
}}
\DeclareDocumentCommand\E{ m g }{\ensuremath{
    {   \IfNoValueTF {#2}
      {\mathbb{E}\mleft[{#1}\mright]}
      {\mathbb{E}\mleft[{#1}\middle\vert{#2}\mright]}%
    }
}}
\newcommand{\greentick}{{\color{green}\ding{51}}} 
\newcommand{\redcross}{{\color{red}\ding{55}}} 
\newcommand\restr[2]{{
		\left.\kern-\nulldelimiterspace 
		#1 
		\vphantom{\big|} 
		\right|_{#2} 
}}
\def\th@plain{%
  \thm@headfont{\bfseries}
  \normalfont
  \thm@preskip=4pt
  \thm@postskip=4pt
}
\renewcommand{\subsection}{\@startsection{subsection}{2}{\z@}%
  {-3.25ex\@plus -1ex \@minus -.2ex}%
  {3 pt}%
  {\normalfont\bfseries}}
\begin{document}

\title[]{Calibrated Forecasting and Persuasion}

\author{Atulya Jain}
\thanks{Department of Economics, University of Bonn. Email: \href{mailto:ajain@uni-bonn.de}{ajain@uni-bonn.de}}

\author{Vianney Perchet}
\thanks{ENSAE $\&$ Criteo AI Lab. Email: \href{mailto:vianney.perchet@normalesup.org}{vianney.perchet@normalesup.org}}

\thanks{Atulya Jain thanks Tristan Tomala, Nicolas Vieille, and Frédéric Koessler for their
invaluable advice and encouragement throughout the project. We also thank Jérôme Renault
and Andreas Kleiner, and various seminar and conference audiences for helpful comments. The research conducted by Atulya Jain was supported by the DATAIA convergence institute as part of the Programme d'Investissement d'Avenir (ANR-17-CONV-0003) operated by HEC Paris and partially funded by the Hi! PARIS Center on Data Analytics and Artificial Intelligence, and by the German Research Foundation (DFG) under Germany's Excellence Strategy – EXC 2126/2–39083886. An extended abstract of this paper appeared in the Proceedings of the ACM Conference on Economics and Computation (EC'24).}

\address{}

\date{March 5, 2026}

\begin{abstract}

		We study a dynamic game where an expert sends probabilistic forecasts to a decision-maker. The decision-maker verifies these forecasts using a calibration test based on past data. How should the expert send forecasts to maximize her payoff while passing the test? For a stationary ergodic process, we characterize the optimal forecasting strategy by reducing the dynamic game to a static persuasion problem. The distributions of forecasts that can arise under calibration are precisely the mean-preserving contractions of the distribution of conditionals. We compare the payoffs attainable by an informed and uninformed expert, providing a benchmark for the value of information. Finally, we consider a regret-minimizing decision-maker and show that the expert can always guarantee at least the calibration benchmark and sometimes strictly more.
\end{abstract}

\maketitle

\section{Introduction}

Probability forecasts are widely used by experts to provide information about uncertain outcomes.  These forecasts shape the beliefs of decision-makers and guide them to take specific actions. For example, an investor relies on forecasts by a financial analyst to determine which asset will perform best. However, the decision-maker follows  the expert's forecasts only if they are credible. A standard way to determine credibility is to perform statistical tests on the realized outcomes and verify the expert's claims. We focus on an objective test:  \textit{calibration}, which is based on the frequentist interpretation of probability. The  idea is to check whether the forecast of an outcome is close to  the actual proportion of times the outcome occurred when the forecast was made. For example, the investor checks whether an asset outperformed others on $70\%$ of the days when the analyst predicted a $0.7$ chance of it being the best. Note that a single day's outcome--whether the asset actually outperformed others--cannot prove the expert right or wrong. However, over time, one can collect data on the forecasts and the realized outcomes to evaluate the expert's credibility.   Calibration is central to forecasting and is used to assess the accuracy of prediction markets \citep{PageClemenPredictionMarket}. 

The decision-maker relies on accurate forecasts to take  optimal actions. But in many  settings, the expert herself has  skin in the game  and is affected by the decision-maker's action. This misalignment of preferences leads to \textit{strategic forecasting}. For instance, if a financial analyst earns a substantial commission upon the purchase of a certain asset, her forecasts may be biased in favor of this asset.  We study the extent of an expert's payoff gain  from strategic forecasting in an infinite horizon dynamic game.  Our main focus is on an \textit{informed expert} who knows the data-generating process and can pass any complex statistical test, including calibration, by providing truthful forecasts.  Given that she is tested by a  calibration test, how should the  expert send forecasts to maximize her payoff?

To do this, we develop a dynamic sender-receiver model where the state of the world evolves according to a stochastic process. In each period, the  sender  first provides  a probabilistic forecast of that period's state to the receiver. The receiver then uses a calibration test to assess the sender's credibility. If the sender passes the test, the receiver takes the forecasts at face value and chooses a best response  to this forecast.  Otherwise, the  sender incurs a punishment cost in that period. Finally, the realized state and action are observed before moving to the next period.

Our approach differs from standard Bayesian models, where the receiver is assumed to know the distribution of the process or at least to hold a prior over it.  In such models, given the sender's strategy, the receiver updates  beliefs from the observed forecasts and chooses  actions accordingly. Handling such beliefs in equilibrium\textemdash even in simple environments\textemdash is complex  and imposes strong rationality assumptions.  In contrast, we assume the receiver has  no prior information about either the process or the sender's strategy.  Instead, he  evaluates the sender's credibility using a calibration test on the history of forecasts and   states.

We first analyze the case of an informed sender who knows the  data-generating process. She can pass the calibration test by forecasting truthfully. However, other forecasting strategies may also pass the test and result in higher payoffs. We show that calibration restricts the feasible long-run distribution of forecasts. For a stationary and ergodic process, the distributions
that can arise under calibration are precisely the mean-preserving
contractions of the distribution of conditionals (i.e., truthful
forecasts). Intuitively, this corresponds to merging low-value and high-value truthful forecasts with appropriate weights so that the overall forecast matches the true frequencies. Thus, while the forecasts must be correct on average, they can be less informative than truthful forecasts.
	
Our main result (Theorem \ref{maintheorem}) establishes that the sender's maximum payoff in the dynamic forecasting game equals the value of a  static persuasion problem. The state,  signal, and  prior in the persuasion problem correspond to the conditional probability, the forecast, and the distribution of conditionals  in the dynamic game. In Bayesian persuasion \citep{BPKamenic}, the sender commits to a signaling policy that maximizes her expected utility. We construct a forecasting strategy that implements an optimal signaling policy period by period, so forecasts depend only on the current conditional. Thus, the sender's dynamic problem reduces to a static persuasion problem. This equivalence shows that the calibration test provides a micro-foundation for the commitment assumption in Bayesian persuasion.
	    
We next consider an uninformed sender who does not know the data-generating process. As shown by  \cite{FosterVohra}, even an uninformed sender can  pass the calibration test for any stochastic process. As before, we characterize the sender's  attainable payoff via a signaling policy in a persuasion problem. We analyze two environments.  First, in an adversarial setting where nature can adapt its strategy, there is no fixed prior, and the attainable payoff is evaluated as a function of the realized long-run empirical distribution of states. In this case, the sender can approximately guarantee the minimal payoff in the canonical persuasion problem, where the mean of the prior equals the realized empirical distribution (Theorem \ref{maxattainable}).\footnote{In the case of an uninformed sender, we can only provide approximate guarantees because we restrict the set of feasible forecasts to be finite.} This payoff is weakly lower than the payoff an informed sender obtains under truthful forecasting. Second, when the unknown process is an ergodic Markov chain, she can do better. In particular, she can approximately attain the payoff corresponding to the no-disclosure policy in the persuasion problem with prior given by the distribution of conditionals\textemdash the same problem as in the informed sender case (Proposition \ref{uninformedstationary}).

We apply our model to analyze a financial platform that provides forecasts about a Markovian state. The platform's payoff depends on forecast precision and user engagement. This creates a trade-off: precise forecasts enhance reputation but reduce  the time a user spends on the platform. We characterize the platform's optimal forecasting strategy using the concavification  approach  and illustrate it graphically. We show that it may be optimal for the platform to send coarse forecasts rather than truthful ones.

One might question whether calibration is an effective decision-making heuristic, given that it permits strategic forecasting. To address this concern, we also model the receiver's behavior using regret minimization, a widely studied approach in online settings. Regret compares the receiver's realized payoff to the payoff he would have obtained by playing the best fixed action in hindsight.  Regret minimization is closely linked to calibration (see \citealp{vianney}), and we build on this connection to obtain new results. We show that if the receiver  myopically best responds to the forecasts of any calibrated  strategy, then he incurs no regret, thereby justifying the use of calibration  as a decision-making heuristic (Proposition~\ref{calibnimpliesnoreg}). Conversely, when the informed sender faces a regret-minimizing receiver, she can guarantee at least the calibration benchmark\textemdash namely, the  persuasion value\textemdash and, for a natural class of regret-minimizing algorithms, sometimes strictly more (Proposition~\ref{noregretQ} and Theorem~\ref{regretmuchmore}).

\subsection{Related Literature}

\textbf{Strategic Information Transmission:} Our work contributes to the literature  on communication between an informed sender and an uninformed receiver. In cheap talk \citep{CrawfordSobel}, the sender's message is unverifiable, whereas in Bayesian persuasion \citep{BPKamenic}, the sender commits to how the message is generated.  In our dynamic setup, the sender does not have commitment power and the messages are probabilistic forecasts.   Our model bridges the cheap talk and persuasion models.  In particular, we contribute to the study of communication in dynamic environments (\citealp{RenaultSolanVieille}, \citealp{margaria2018dynamic}, \citealp{mathevet2019reputation}, \citealp{kuvalekar2022goodwill}, \citealp{BestQuigley}).  Our main result shows that the dynamic forecasting game can be reduced to a static persuasion problem with a large state space in which the sender's utility depends only on the posterior mean (\citealp{gentzkow2016rothschild}, \citealp{kolotilin2018optimal}, \citealp{DworczakPiotr}, \citealp{kleiner2021extreme}, \citealp{arieli2023optimal}).  Thus, our work provides a micro-foundation for the commitment assumption in  Bayesian persuasion through   calibration.		The question of whether repetition can substitute for commitment is also explored in \cite{BestQuigley} and \cite{mathevet2019reputation}, who offer rationales based on coarse summaries (or incomplete histories) and reputation, respectively.
 These works study i.i.d. environments with a long-run sender, state-independent payoffs, and myopic receivers. In contrast, we consider a general dynamic setting with full history and a non-Bayesian receiver. 	 Our paper is also related to the static forecasting model of \citet{GuoShmaya}. As in our model, messages are probabilistic forecasts. However, they define calibration error as the distance between a forecast and the true conditional distribution over states induced by the sender's strategy, and impose an exogenous cost based on this deviation. In our dynamic setting, calibration is tested using realized states, since the receiver has no information about the data-generating process. Our equivalence result provides a dynamic counterpart to their finding that sufficiently high miscalibration costs lead  equilibrium outcomes to coincide with the persuasion benchmark.\\

\textbf{Calibration and Expert Testing:}   The initial focus of this literature has been to design a statistical test that can distinguish between an informed expert, who knows the data-generating process, and an uninformed expert, who does not. The calibration test is an objective and popular criterion for evaluating experts. However, \cite{FosterVohra} show that even an uninformed expert can pass the calibration test for any process. Despite this, calibration is crucial for decision-making based on forecasts  \citep{foster2021forecast}. We build on this literature by exploring the use of the calibration test as a heuristic for decision-making.  Specifically, we characterize the extent of strategic forecasting when the calibration test is used to determine the credibility of the expert. We compare the  payoffs that an informed and  an uninformed expert can achieve for a given process.  Our work is related to \cite{EcheniqueShmayaFool}, \cite{gradwohlsalant2011buy} and \cite{OlszewskiPeskiPA}, who also examine    forecasting in the context of a  decision problem. They show that there is a test that an informed expert passes, enabling the decision-maker to benefit from their advice. Conversely, if an uninformed expert also passes, their forecasts do not cause significant harm. Calibration also has important applications in machine learning (see \citealp{guptaramdas}).   We refer curious readers to \cite{foster2013calibration} and \cite{CalibrationExpertTesting}  for  comprehensive surveys on this topic. \\

\textbf{Regret minimization and Online Learning:} Regret minimization, which uses the best fixed action in hindsight as a natural benchmark, provides simple procedures for online decision-making and ensures behavior that leads to as-if rational outcomes (see \citealp{cesalugosi}, \citealp{hart2013simple}).  There has been growing interest in studying regret minimization   within strategic environments. We focus on optimizing against regret minimizing agents \citep{Braverman2018, DengSivanregret, camara2020mechanisms, lin2024generalized}. The closest paper is \cite{lin2024generalized}, which independently studies a repeated sender-receiver setting and provides non-asymptotic bounds on what the sender can guarantee. They analyze an i.i.d. setting where the sender knows the state in each period. In contrast, our framework is broader, allowing the sender to know only the distribution of states.  Consequently, their bounds and techniques do not directly apply to our setup. 	 A natural class of regret learning algorithms called \textit{mean-based learning algorithms} was introduced by \cite{Braverman2018} and further explored by \cite{DengSivanregret}.  A common theme in the literature   is that the sender can obtain  higher payoff  than the rational benchmark when facing  such algorithms.  Similarly, we show the sender can obtain a  higher utility than the calibration benchmark against a mean-based learner.    We show a novel connection between regret minimization and the calibration test as heuristics for decision-making.  Additionally, \textit{U-calibration}, as introduced in \cite{kleinberg2023u}, is relevant as it ensures low external regret for receivers with unknown payoff functions.

\subsection{Organization of the paper}
The rest of the paper is organized as follows.	In Section \ref{sectionmodel}, we introduce the  dynamic forecasting game and  the calibration test. In Section \ref{sectionMainresults}, we present the persuasion problem, state our main results and  provide an application.   In Section \ref{sectionnoregret}, we consider a receiver who minimizes regret instead of using the calibration test. In  Section \ref{sectionconclusion}, we conclude and discuss  future work. All omitted proofs are in Appendix \ref{proofs}. Appendix \ref{secPerBlackExp} contains a reformulation of the persuasion problem in terms of Blackwell experiments.

\section{Dynamic forecasting game} \label{sectionmodel}

We consider a dynamic  game between a sender (she) and a receiver (he), where the state of the world evolves over time. In each period,  the sender sends a forecast about that period's unknown state. The receiver observes the forecast and then chooses an action. The realized state and the action  are observed before  proceeding to the next period. 

Let $\Omega$ denote the finite set of states, $F \subseteq \Delta (\Omega)$ denote the set of  feasible forecasts over the states, and $A$ denote the finite set of actions.\footnote{$\Delta(X)$ denotes the set of all probability distributions over the set $X$.} Unless specified otherwise, all forecasts are feasible, i.e.,  $F= \Delta (\Omega)$. 
  The state ${(\omega_t)}_{t \geq 1}$ evolves over time and is governed by a stochastic process with distribution $\mu \in \Delta (\Omega^\infty)$.  Denote by $\omega_t$  the state in period $t$, and by  $\omega^t=(\omega_1,...,\omega_{t-1})$ the  history of  states up to period $t$.   We assume the sender is informed and  knows the  distribution $\mu$, whereas the  receiver is uninformed.\footnote{ Section \ref{secuninformed} considers the case where the sender is uninformed.} In particular, given the history $\omega^t$, the sender can compute the  \textit{conditional probability} in period $t$, denoted by  $p_t = \mu(\cdot \mid \omega^{t}) \in \Delta (\Omega)$.  Hence, the sender knows the objective probability over states   at each period. We assume that the  conditionals  take values from a finite set $D \subset \Delta (\Omega)$.\footnote{This holds in natural environments, 	such as finite Markov chains. If $D$ is not finite, one can construct a 	finite $\epsilon$-grid $L := \{p_l \in \Delta(\Omega) : l \in L\}$ such that  	for any $p \in \Delta(\Omega)$ there exists $l \in L$ with 
  	$\|p - p_l\| \le \epsilon$.}

In each period, the sender's forecast  may depend on the entire history of past forecasts and  states. Formally, a forecasting strategy is a map $\sigma: \bigcup_{t \geq 1} (F \times \Omega)^{t-1} \to \Delta (F)$. Let $f_t$ denote the realized forecast in period $t$. After observing the forecast $f_t$, the receiver chooses an action $a_t \in A$ and then the state $\omega_t \in \Omega$ is realized.  The  state and the receiver's action in that period determine the sender's and receiver's payoffs, denoted by $u_S(\omega_t,a_t)$ and $u_R(\omega_t,a_t)$. A forecasting strategy is \textit{truthful} if, in every period $t$, the sender reports the conditional, that is, $f_t = p_t$.

\paragraph*{Calibration Test:} To model the receiver's behavior, we adopt a frequentist approach based on calibration. The receiver has no prior information about the stochastic process or the sender's strategy. Instead, he  evaluates the sender's credibility by  comparing her forecasts with the realized frequencies of states. In each period, he computes the calibration error using  the history of forecasts and states. The error margin $\epsilon_t \geq 0$ represents the receiver's tolerance for the calibration error  at period $t$.

\begin{definition} \label{calibrationdef}
	A $T$-sequence of forecasts $(f_t)_{t=1}^{T}$ is  $\epsilon_T$-\textit{calibrated}  if 
	\begin{equation}
		\sum_{f \in F}\frac{N_{T}[f]}{T}  \| \overline{\omega}_T[f] - f \| \leq \epsilon_T,
	\end{equation}
	where $ N_{T}[f]$ and  $\overline{\omega}_T[f]$ denote, respectively, the number of periods  in which the forecast equals $f$ up to period  $T$ and  the empirical distribution of states in those periods:
	\begin{equation*}
			N_{T}[f] := \sum_{t=1}^T \mathbf{1}_{\{f_t=f\}}, \qquad  \overline{\omega}_T[f]:= \frac{1}{N_T[f]} \sum_{t=1}^T \mathbf{1}_{\{f_t=f\}} \delta_{\omega_t},
	\end{equation*} 
	where $\delta_\omega \in \Delta (\Omega)$ denotes the Dirac distribution on state $\omega$ and  $\| \cdot \|$ denotes  the  Euclidean norm.\footnote{For completeness, if $N_T[f] =0$, define $\overline{\omega}_T[f] = f$.}
\end{definition}

A $T$-sequence of forecasts is $\epsilon_T$-calibrated if the frequency-weighted average distance between forecasts and the empirical distributions of states  does not exceed the error margin $\epsilon_T$.\footnote{This holds even if the set  $F$ is infinite, as the sum in the definition is implicitly restricted to the forecasts actually sent, which are finite. } We illustrate the calibration test using a  rain forecasting example.

\begin{example} \label{binaryexample}
Consider the states $\Omega=\{0,1\}$, where $\omega=1$ denotes rain and $\omega=0$ denotes no rain.   The weather evolves over time according to a Markov chain: with probability $0.8$ the state remains the same from one period to the next, that is,
$\mu(\omega_{t+1}=1 \mid \omega_t=1)=\mu(\omega_{t+1}=0 \mid \omega_t=0)=0.8$ for every $t \in \mathbb{N}$. In each period, the sender predicts the probability of rain.  Consider three sequences of forecasts, denoted $F1$, $F2$, and $F3$ (Table \ref{tablerain}). $F1$ corresponds to truthful forecasts, where the sender announces $80\%$ after rain and $20\%$ after no rain.  $F2$ corresponds to coarse forecasts obtained by garbling the truthful forecasts. $F3$ corresponds to extreme forecasts, predicting $100\%$ after rain and $0\%$ after no rain.

\begin{table}[ht]
	\centering
\caption{Sequences of forecasts and the calibration test.}
	\label{tablerain}
	\begin{tabular}{l||ccccccccccc}
		\textbf{Period} & 1 & 2 & 3 & 4 & 5 & 6 & 7 & 8 & 9 & 10 & \\ \hline
		\textbf{State} & 1 & 0 & 0 & 0 & 0 & 0 & 1 & 1 & 1 & 1 & \\ \hline
		\textbf{F1} & 80\% & 80\% & 20\% & 20\% & 20\% & 20\% & 20\% & 80\% & 80\% & 80\% & \greentick \\
		\textbf{F2} & 60\% & 60\% & 40\% & 40\% & 60\% & 40\% & 40\% & 60\% & 60\% & 40\% & \greentick \\
		\textbf{F3} & 100\% & 100\% & 0\% & 0\% & 0\% & 0\% & 0\% & 100\% & 100\% & 100\% & \redcross \\
		\hline
	\end{tabular}
\end{table}
	For the error margin $\epsilon_{10}=0.1$, both $F1$ and $F2$ pass the $\epsilon_{10}$-calibration test, while $F3$ does not. This is because in periods in which $F3$ predicts a $100\%$ chance of rain, it does not always rain. In contrast, both  $F1$ and $F2$    match  the realized frequencies. For instance, in periods in which $F2$ predicts a $60\%$ chance of rain, it rains in three out of	five periods, and similarly for $40\%$. 
	
	The truthful forecasts $F1$ pass the calibration test in this example. This reflects the fact that, in the long run, empirical frequencies converge to the true conditional probabilities.	Among the non-truthful forecasts $F2$ and $F3$, both have the same mean as the truthful forecasts, but only $F2$ is  less informative than $F1$. Intuitively, in the long run, calibrated forecasts must preserve the correct mean but can be coarser than truthful forecasts.
\end{example}

\textbf{Pass}: A sender passes the calibration test in period $t$ if the sequence of past forecasts $(f_i)_{i=1}^{t-1}$ is $\epsilon_{t-1}$-calibrated.\footnote{In period $t=1$, there is no past data, so the sender passes the test trivially.} In this case, the receiver takes the forecast at face value and responds as if the state $\omega_t$ is distributed according to the forecast $f_t$. The receiver chooses the action $a_t=\hat{a}(f_t)$, where $\hat{a}(f)$ denotes his best response given belief $f$.\footnote{In line with the literature, if the receiver has multiple best responses, ties are broken in favor of the sender, so $\hat{a}(f)$ is uniquely defined.} Formally,
\begin{equation*}
	\hat{a}(f) := \arg \max_{a \in A} \sum_{\omega \in \Omega} f(\omega)u_R(\omega, a).
\end{equation*}

\textbf{Fail}: If the sequence of past forecasts is not $\epsilon_{t-1}$-calibrated,  the sender fails the calibration test in period $t$. In this case, she incurs a punishment cost $-c$, where $c>0$.   The receiver refrains from playing according  to the sender's forecast until she passes the test in a future period. This punishment can be interpreted as the receiver reverting to a default action.\footnote{Formally, this corresponds to a default action $a_d$ such that $u_S(\omega,a_d)=-c$ and $u_R(\omega,a_d)=0$ for all $\omega\in\Omega$.} For example, in financial forecasting, this corresponds to the investor not buying any asset, resulting in zero commission for the analyst. Alternatively, it may reflect a loss of credibility: an intermediary, such as a platform, may forward the sender's forecasts only if she passes the calibration test.

The sender's goal is to choose a  forecasting strategy $\sigma^\star$ that maximizes her expected long-run average  payoff:
\begin{equation} \label{constrainedmax}
	\liminf_{T \to \infty} \frac{1}{T} \sum_{t=1}^T \mathbb{E}_{\sigma^\star,\mu} \left[ u_S(\omega_t, a_t) \right].
\end{equation}

We refer to the strategy $\sigma^\star$ as the \textit{optimal forecasting strategy}.  To make the problem tractable and economically meaningful, we impose two natural assumptions.

First, observe that  for a  sufficiently small error margin $\epsilon_T$, no forecasting strategy can be $\epsilon_T$-calibrated for all  possible $T$-sequences of forecasts and states. Even a truthful sender   fails the $\epsilon_T$-calibration test with  positive probability. For instance,   in  Example  \ref{binaryexample}, when $\epsilon_1=0.1$,  the truthful forecasts are not $\epsilon_1$-calibrated  regardless of which state $\omega_1$ is realized. Nevertheless, as more states are observed, empirical frequencies  converge   to the truthful forecasts.  This motivates the definition of  a calibrated forecasting strategy. 
	
	\begin{definition} \label{calibasymdef}
		A forecasting strategy $\sigma $  is $\epsilon$-\textit{calibrated}  if
		\begin{equation} \label{eq:calibrationasym}
			\limsup_{T \to \infty}  \sum_{f \in F} \frac{N_{T}[f]}{T}  \| \overline{\omega}_T[f] -  f\|   \leq \epsilon, \quad \mathbb{P}_{\sigma,\mu}\text{-a.s.} 
		\end{equation}
		A forecasting strategy $\sigma$ is \textit{calibrated} if it is $\epsilon$-calibrated, for every $\epsilon > 0$. 
	\end{definition}

Intuitively, a forecasting strategy is calibrated if, for every forecast used with positive long-run frequency, the empirical distribution of states given that forecast equals the forecast itself. Unlike the finite  test, the asymptotic notion does not depend on the chosen sequence of error margins.

	The truthful forecasting strategy is calibrated for any stochastic process \citep{dawid1982well}. Moreover, one can choose the sequence $\epsilon_t \to 0$
	slowly enough so that truthful forecasting fails the finite calibration test
	only finitely often (see Proposition \ref{propseqoferror} in Appendix \ref{proofs}).  We  impose the following requirement on the sequence of error margins.
	
\begin{assumption}\label{assumerror}
	Fix a stationary ergodic process $\mu$. The sequence of error margins $(\epsilon_t)_{t \ge 1}$ converges to zero and is chosen so that, for any calibrated forecasting strategy that depends only on the current conditional, the expected long-run frequency of failure of the calibration test is zero.
\end{assumption}

		Assumption \ref{assumerror} rules out sequences of error margins that would lead the receiver to erroneously reject well-behaved strategies in the long run. In particular, it ensures that calibrated forecasting strategies that depend only on the current conditional\textemdash such as truthful forecasting\textemdash are not rejected with positive long-run frequency in expectation.\footnote{ Uniform rates of convergence need not hold across arbitrary strategies and processes unless stronger dependence conditions, such as	mixing or irreducibility in Markov chains, are imposed.}  Our results hold for any sequence of error margins satisfying Assumption \ref{assumerror}.

		Second, we use calibration as the credibility criterion for the sender. Although the sender can pass the calibration test, she may lack the incentive to do so. In particular, when the punishment cost is low, she may optimally choose an uncalibrated forecasting strategy. However, we show that for a stationary ergodic process, if the  punishment cost exceeds a finite bound, then any optimal  strategy must be calibrated.\footnote{\cite{GuoShmaya} obtain a similar threshold result in a static model with  miscalibration costs.}

\begin{assumption} \label{assum:highcost}
	The punishment cost satisfies  $c > (1+\sqrt{2})L$, where $L:=\max_{a \in A} \| u_S(\cdot,a) \|$.
\end{assumption}

Assumption~\ref{assum:highcost} implies a stronger property than asymptotic calibration: for any stationary ergodic process, the optimal forecasting strategy passes the calibration test with long-run frequency one in expectation (see Proposition~\ref{highpunishmentcost} in Appendix~\ref{proofs}).

	\section{Main results} \label{sectionMainresults}
	
Our main result characterizes the optimal forecasting strategy for a stationary ergodic process by reducing the dynamic forecasting game to a static persuasion problem. First, we introduce the persuasion problem before stating the main result. We then analyze the payoff that an uninformed sender can guarantee. Finally, we illustrate the results with an application to a financial forecasting platform.

\subsection{ Persuasion  problem} \label{subsectionpersuasiongame}
We consider a static persuasion problem whose parameters are directly linked to those of the dynamic forecasting game. In this mapping, the state space corresponds to conditional probabilities, the signals correspond to forecasts, and the prior is given by the distribution of conditionals.

We consider a static persuasion problem in which the state space is $\Delta(\Omega)$ and the prior is $P \in \Delta(\Delta(\Omega))$.	The sender commits to a signaling policy $\pi: \Delta (\Omega) \to \Delta (\Delta (\Omega))$.\footnote{To clarify,  the sender does not have commitment power in the dynamic forecasting game.} Once the conditional  $p \in \Delta (\Omega)$ is realized, the sender sends a forecast $f \in \Delta (\Omega)$ according to  $\pi(\cdot \mid p)$. Each  forecast   induces  a posterior belief over conditionals and, consequently, a posterior mean. Thus, a signaling policy induces a distribution of posterior means  $Q \in \Delta(\Delta (\Omega))$.  Without loss of generality, we restrict attention to  policies under which the posterior mean induced by forecast $f$ equals $f$.  It is well known that a distribution $Q$  is implementable by a signaling policy if and only if it is a \textit{mean-preserving contraction} of the prior  $P$ (e.g., \citealp{gentzkow2016rothschild}, \citealp{kolotilin2018optimal}, \citealp{kleiner2021extreme}, \citealp{arieli2023optimal}).

	
A probability distribution with finite support can be represented as 
$P=(\bm{\lambda},\bm{q})$, where 
$\bm{q}=(q_1,\ldots,q_n)$ denotes the support, with $q_i \in \Delta(\Omega)$ for all $i$, and  $\bm{\lambda}=(\lambda_1,\ldots,\lambda_n)$ the associated weights, 
with $\lambda_i>0$ for all $i$ and $\sum_{i=1}^n \lambda_i=1$.  Let $Q=(\bm{\mu},\bm{r})$ be another such distribution with support $\bm{r}=(r_1,\ldots,r_m)$ and mass $\bm{\mu}=(\mu_1,\ldots,\mu_m)$. We follow \citet{EltonHillFusion} and \citet{WhitmeyerJoseph2019MoMC} in defining mean-preserving contractions.

\begin{definition}\label{defsmpc}
	A  probability distribution $Q=(\bm{\mu},\bm{r})$ is a \textit{simple mean-preserving contraction}   of $P=(\bm{\lambda},\bm{q})$, if there exists a row-stochastic matrix $G\in \mathbb{R}^{n \times m}$ satisfying
	\begin{align} \label{smpc}
		& \bm{\lambda} G=\bm{\mu},\\
		& (\bm{\lambda q})G  = (\bm{\mu r}), \label{smpc2}
	\end{align}
	where $\bm{\lambda q}=(\lambda_1 q_1,\ldots,\lambda_n q_n)$ and $\bm{\mu r}=(\mu_1 r_1,\ldots,\mu_m r_m)$.
\end{definition}

Intuitively, a simple mean-preserving contraction takes a fraction $G_{ij}$ of the mass $\lambda_i$ at each $q_i$ and pools these to form mass $\mu_j$ at $r_j$.

\begin{definition} \label{defmpc}
	$Q$ is a  \textit{mean-preserving contraction}  of $P$ if there exists a sequence of  simple mean-preserving contractions $( Q_m)_{m=1}^\infty $ such that
	 $Q_m \to Q$ (weak convergence). Denote by $\mathcal{M}(P)$ the set of all mean-preserving contractions of $P$.  
\end{definition}

			The sender's utility  depends only on the posterior mean. Let $\hat{u}_S(f)$  denote the sender's indirect utility when the posterior mean is $f \in \Delta(\Omega)$, defined by
			\begin{equation*}
				\hat{u}_S(f) := \sum_{\omega \in \Omega} f(\omega) \, u_S(\omega, \hat{a}(f)),
			\end{equation*}
			where $\hat{a}(f)$ is the receiver's best response to  belief $f$. In the dynamic game, $\hat{u}_S(f)$ equals the sender's expected payoff when the receiver best responds to belief $f$.

		The sender chooses a distribution $Q \in \mathcal{M}(P)$ to maximize her expected indirect utility. Given  prior $P \in \Delta(\Delta \Omega)$ and indirect utility $\hat u_S$,	define  the \textit{persuasion value} by
		\begin{equation} \label{persuasiongarble}
			\mathrm{Per}(P,\hat u_S)
			:= \max_{Q \in \mathcal{M}(P)} \mathbb{E}_{Q}[\hat u_S].
		\end{equation}
		Let $Q^*$ denote an optimal distribution of forecasts solving \eqref{persuasiongarble}, and let $\pi^*$ denote a signaling policy that implements it.

		For example, let  $\Omega= \{0 ,1\}$ and let $f$ denote the probability that the state is $1$. Suppose the sender's indirect utility is given by
		\[
		\hat{u}_S(f)=
		\begin{cases}
			1 + 10|f-0.5| & \text{if } 0.4 \le f \le 0.6,\\
			0 & \text{otherwise.}
		\end{cases}
		\]

		Suppose the prior $P$ places equal mass on $0.2$ and $0.8$. The optimal signaling policy $\pi^*$ sends only the forecasts $0.4$ and $0.6$: 	when the conditional equals $0.2$, the sender reports $0.4$ with probability $\frac23$ 	and $0.6$ with probability $\frac13$; when it equals $0.8$, these probabilities are reversed.	This policy implements the optimal distribution $Q^* \in \mathcal{M}(P)$, 
		which places equal mass on the forecasts $0.4$ and $0.6$. Since $\hat u_S(0.4)=\hat u_S(0.6)=2$, the sender's expected utility equals $2$.
		The prior  $P$ and the optimal distribution  $Q^*$ are given by
	\[
	P=
	\begin{bmatrix}
		\bm{\lambda}\\[-2pt]
		\bm{q}
	\end{bmatrix}
	=
	\begin{bmatrix}
		\frac{1}{2} & \frac{1}{2}\\
		\frac{1}{5} & \frac{4}{5}
	\end{bmatrix},
	\qquad
	Q^*=
	\begin{bmatrix}
		\bm{\mu}\\[-2pt]
		\bm{r}
	\end{bmatrix}
	=
	\begin{bmatrix}
		\frac{1}{2} & \frac{1}{2}\\
		\frac{2}{5} & \frac{3}{5}
	\end{bmatrix}.
	\]

	When $\mathrm{Supp}(P)$ is affinely independent,  the persuasion problem can be solved using the concavification approach \citep{aumann1995repeated,BPKamenic} applied to a restricted domain.
	\begin{proposition} \label{bayeaffine}
		If $\mathrm{Supp}(P)$ is affinely independent, then
		\begin{equation}
			\mathrm{Per}\, (P, \hat{u}_S)=  \mathrm{Cav}  \, \restr{  \hat{u}_S}{C}(m(P))
		\end{equation}
		where $\mathrm{Cav}  \, \restr{  \hat{u}_S}{C}$ denotes the concave envelope of $\hat{u}_S$ restricted to $C= \mathrm{Co} (\mathrm{Supp}(P))$ and $m(P)= \sum_{i=1}^n \lambda_i q_i$ denotes the mean of $P$.\footnote{$\mathrm{Co}(A)$ refers to the convex hull of set $A$.}
	\end{proposition}
		
In this case,  feasibility reduces to Bayes plausibility on $C=\mathrm{Co}(\mathrm{Supp}(P))$. A special case is when the sender is perfectly informed, that is, $\mathrm{Supp}(P)=\{ \delta_\omega: \omega \in \Omega \}$, which corresponds to the canonical Bayesian persuasion setting  \citep{BPKamenic}.
	
	By Bauer's maximum principle, the maximum is attained at an extreme point of $\mathcal{M}(P)$.  When $\mathrm{Supp}(P)$ is finite, \cite{WhitmeyerJoseph2019MoMC} show that any extreme point $Q$ must satisfy 
	$|\mathrm{Supp}(Q)| \le |\mathrm{Supp}(P)|$.\footnote{\cite{DworczakPiotr}, \cite{kleiner2021extreme}, and \cite{arieli2023optimal} study the persuasion problem for a non-atomic prior $P$ and interval state space $[0,1]$. They show that in this setting it suffices to restrict attention to bi-pooling policies, which correspond to extreme points.} Appendix \ref{secPerBlackExp} provides an equivalent formulation of the persuasion problem in terms of Blackwell experiments rather than mean-preserving contractions.
	
	\subsection{Informed sender} \label{subsectiondynamic}

	In this section, we examine the dynamic game with an informed sender\textemdash one who knows the data-generating process. We show that when the process is stationary and ergodic, calibrated strategies induce precisely the mean-preserving contractions of the distribution of conditionals. This  reduces the sender's dynamic problem to a static persuasion problem.

	Given a stochastic process $\mu$, let  $C_\mu \in \Delta(\Delta (\Omega))$ denote the \textit{distribution of conditionals}:
	\begin{equation} \label{distributionofconditionals}
		C_\mu(p) = \lim_{T \to \infty} \frac{1}{T} \sum_{t=1}^T \boldsymbol{1}_{\{ p_t = p \}}  \quad \text{ (if limit exists)}
	\end{equation}
	where $p_t = \mu(\cdot \mid \omega_1,\ldots,\omega_{t-1})$. Both $p_t$ and $C_\mu$ are random variables: $p_t$ depends on the history 
	$\omega^t$, and $C_\mu$ depends on the entire realization $\omega^\infty$.
	
	   Given a stochastic process $\mu$ and a forecasting strategy $\sigma$, let $F_{\mu,\sigma} \in \Delta(\Delta (\Omega))$ denote  the \textit{distribution of forecasts}:
	\begin{equation}
		F_{\mu,\sigma}(f) =  \lim_{T \to \infty}  \frac{1}{T} \sum_{t=1}^T  \boldsymbol{1}_{\{ f_t=f \}}    \quad \text{ (if limit exists)}
	\end{equation}
	where $f_t$ is  drawn  according to $\sigma(f_1, \omega_1, \ldots,f_{t-1},\omega_{t-1})$. Let $\mathcal F_{\mu,\sigma}$ denote the set of all limit points of the empirical distributions of forecasts.	 Under the truthful forecasting strategy,  the empirical frequencies of forecasts and conditionals coincide.

	For any stochastic process $\mu$, the truthful forecasting strategy is always calibrated. However,  without further restrictions on the process, characterizing the set of feasible long-run outcomes that pass the calibration test is difficult since the distribution $C_\mu$ in \eqref{distributionofconditionals} may not be well defined. We therefore restrict attention to stationary ergodic processes, for which the distribution of conditionals is well defined and constant.  Example \ref{binaryexample} illustrates this for a Markov chain with binary states.
	
	\setcounter{example}{0}
	
	\begin{example}[continued]
		There are two states $\Omega=\{0,1\}$ which  evolve according to transition probabilities $\mu(\omega_{t+1}=1 \mid\omega_t= 0)=0.2$ and $\mu(\omega_{t+1}=1\mid \omega_t=1)=0.8$ for all $t$.  Since the Markov chain is irreducible and aperiodic, it admits a unique invariant distribution, and the long-run frequency of each state is $0.5$, independent of the initial distribution. In the long run, both no rain ($\omega = 0$) and rain ($\omega = 1$) occur with equal  probability $0.5$.
		
		 This means that in half of the periods we observe $\omega = 0$, in which  case the conditional probability of rain in the next period is $20\%$, and in the other half  we observe $\omega = 1$, in which case the conditional probability of rain is $80\%$. Hence, the distribution of conditionals $C_\mu$ exists and is constant  almost surely: it places equal mass $0.5$ on the support points $0.2$ and $0.8$.  An informed sender, who knows $\mu$, also knows $C_\mu$. This distribution  serves as the prior in the static persuasion problem to which the dynamic  game reduces.
	\end{example}

			A  stochastic process $(\omega_t)_{t \geq 1}$ is \textit{stationary} if, for any $k \in \mathbb{N}$, the joint distribution of the $k-$tuple ($\omega_t,\omega_{t+1},...,\omega_{t+k-1}$) does not depend on $t$.   Let $T:  \Omega^\infty \to \Omega^\infty  $ be the shift transformation given by $T(\omega)_t = \omega_{t+1}$ for all $t \in \mathbb{N}$. Let $\mathcal{I}$ denote the $\sigma-$algebra of all invariant Borel sets for the transformation $T$. The stationary process  $(\omega_t)_{t \geq 1}$ is \textit{ergodic} if $\mathcal{I}$ is trivial, that is, $\mathbb{P}(A) \in \{0,1 \}$ for all $A \in \mathcal{I}$.
			
The following lemma characterizes the distributions of forecasts that can arise under calibration for a stationary ergodic process.

\begin{lemma} \label{mpclemmastationary}
	Fix a stationary ergodic process $\mu$. If
	 a forecasting strategy $\sigma$ is calibrated  then  $ \mathcal F_{\mu,\sigma} \subseteq \mathcal{M}(C_\mu)$. Conversely, for any $Q \in \mathcal{M}(C_\mu)$ with finite support, there exists a calibrated strategy $\sigma$ such that $F_{\mu,\sigma}=Q$ and that fails the calibration test with expected long-run frequency zero.
\end{lemma}

			Intuitively, calibration requires that  the sender uses appropriate weights to combine conditionals into forecasts whose long-run frequencies of states match the announced forecasts. These weights correspond precisely to a mean-preserving contraction of the distribution of conditionals. Given a finitely supported $Q \in \mathcal{M}(C_\mu)$, let $\pi$ denote a signaling policy that implements $Q$ in the persuasion problem and define $\sigma_t(\cdot \mid p_t)=\pi(\cdot \mid p_t)$ for all $t$. The proof shows that this strategy  induces $Q$ and fails the calibration test  in a vanishing fraction of periods in expectation. As discussed in Section \ref{subsectionpersuasiongame}, when $\mathrm{Supp}(C_\mu)$ is finite, the persuasion value can be attained by some $Q \in \mathcal M(C_\mu)$ with finite support.

		We now state our main result. The maximum payoff of the informed sender in the dynamic forecasting game equals the persuasion value with prior $C_\mu$ and indirect utility $\hat u_S$.

			\begin{theorem} \label{maintheorem}
			For a stationary ergodic process $\mu$, the informed sender's maximum payoff equals $\mathrm{Per}(C_\mu, \hat{u}_S)$.
			\end{theorem}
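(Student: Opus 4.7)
The plan is to construct a calibrated forecasting strategy directly from the optimal signaling policy of the static persuasion problem and then verify that its long-run average payoff matches $\textit{Per}(C_\mu,\hat{u}_S)$.

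\textbf{Construction.} Let $\pi^{*}:\Delta\Omega \to \Delta(\Delta\Omega)$ be an optimal signaling policy in the persuasion problem $(C_\mu,\hat{u}_S)$, and let $Q^{*}\in\mathcal{M}(C_\mu)$ be the distribution of forecasts it induces. Since posteriors can be taken to equal their forecasts, we have the identity
\begin{equation*}
\sum_{p\in D} C_\mu(p)\,\pi^{*}(f\mid p)\,p \;=\; Q^{*}(f)\,f \quad\text{for every }f\in\mathrm{Supp}(Q^{*}).
\end{equation*}
Define the forecasting strategy $\sigma^{*}$ as follows: at every period $t$, the sender computes the conditional $p_t=\mu(\cdot\mid\omega^t)\in D$ and draws $f_t\sim \pi^{*}(\cdot\mid p_t)$ independently of the past given $p_t$. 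Note that $\sigma^{*}$ is Markovian in the conditional.

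\textbf{Calibration.} I need to show $\sigma^{*}$ is calibrated. Fix $f\in\mathrm{Supp}(Q^{*})$. First, since $\mu$ is stationary and ergodic and $\{p_t\}$ is a measurable function of $\omega^\infty$, Birkhoff's ergodic theorem gives $|\mathbb{N}_T[p]|/T \to C_\mu(p)$ almost surely for each $p\in D$. Conditional on these periods, the forecasts are drawn i.i.d.\ from $\pi^{*}(\cdot\mid p)$, so a second application of the law of large numbers yields $|\{t\le T:\,p_t=p,\,f_t=f\}|/T \to C_\mu(p)\pi^{*}(f\mid p)$ almost surely, and in particular $|\mathbb{N}_T[f]|/T \to Q^{*}(f)$. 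Next, by construction $\mathbb{E}[\delta_{\omega_t}\mid\omega^t,f_t]=p_t$, so $M_T:=\sum_{t\in\mathbb{N}_T[f]}(\delta_{\omega_t}-p_t)$ is a martingale with bounded increments. Azuma--Hoeffding combined with Borel--Cantelli gives $M_T/|\mathbb{N}_T[f]|\to 0$ almost surely. Combined with the averaging identity above,
\begin{equation*}
\overline{\omega}_T[f]\;=\;\frac{1}{|\mathbb{N}_T[f]|}\sum_{t\in\mathbb{N}_T[f]}\delta_{\omega_t}\;\longrightarrow\;\frac{1}{Q^{*}(f)}\sum_{p\in D} C_\mu(p)\pi^{*}(f\mid p)\,p \;=\; f.
\end{equation*}
Summing over the (finitely many) forecasts in the support of $Q^{*}$ gives calibration for every $\varepsilon>0$.

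\textbf{Payoff.} Since $\sigma^{*}$ is calibrated, the receiver eventually never fails the test (by Assumption~\ref{assumerror} and the same argument as for honest forecasting), so on all but finitely many periods the receiver plays $\hat{a}(f_t)$ and the sender's stage payoff has conditional mean $\hat{u}_S(f_t)$. Applying the ergodic theorem again together with the convergence $|\mathbb{N}_T[f]|/T\to Q^{*}(f)$, the long-run average payoff converges almost surely to $\sum_{f} Q^{*}(f)\hat{u}_S(f)=\mathbb{E}_{Q^{*}}[\hat{u}_S]=\textit{Per}(C_\mu,\hat{u}_S)$.

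\textbf{Main obstacle.} The delicate step is the calibration verification: the forecasts $\{f_t\}$ are not i.i.d.\ and the states $\{\omega_t\}$ are not independent either, so convergence of $\overline{\omega}_T[f]$ must be deduced from a martingale concentration argument layered on top of the ergodic theorem rather than a direct LLN. Ensuring that calibration holds simultaneously for all $f$ in the (finite) support of $Q^{*}$ and that the rate is compatible with Assumption~\ref{assumerror} is the technical core of the proof.
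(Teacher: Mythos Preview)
Your proposal is correct and follows essentially the same route as the paper: construct $\sigma^{*}$ from the optimal signaling policy $\pi^{*}$, verify calibration via a martingale (Azuma--Hoeffding) argument layered on top of ergodic convergence of the conditionals, and then read off the long-run payoff. Two minor remarks. First, the convergence $|\mathbb{N}_T[p]|/T\to C_\mu(p)$ is not a direct application of Birkhoff, because $p_t=\mu(\cdot\mid\omega_1,\dots,\omega_{t-1})$ depends on a growing window rather than being $g\circ T^{t}$ for a fixed $g$; the paper handles this via Maker's ergodic theorem applied to $f_n\to f_\infty$ on the two-sided extension. Second, the paper's proof also contains the upper bound---using the necessity direction of its Lemma~\ref{mpclemmastationary} (calibrated $\Rightarrow$ $F_{\mu,\sigma}\in\mathcal{M}(C_\mu)$) to conclude that no calibrated strategy can exceed $\textit{Per}(C_\mu,\hat{u}_S)$---whereas you only establish achievability; this suffices for the theorem as literally stated, but you should be aware that the optimality half is where the full force of the mpc characterization is used.
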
            
	\begin{proof} 
For a stationary ergodic process $\mu$, the distribution of conditionals $C_\mu$ exists and is constant almost surely (see Lemma \ref{stationaryergodic} in Appendix \ref{proofs}). Hence, it serves as the prior in the associated static persuasion problem.
		
	By Lemma \ref{mpclemmastationary}, calibrated strategies induce exactly  the distributions of forecasts  in  $\mathcal{M}(C_\mu)$, and  every finitely supported $Q\in\mathcal{M}(C_\mu)$ can be induced  by a calibrated strategy that fails the calibration test only in a vanishing fraction of periods. Hence, if a forecast $f$ is sent with positive long-run frequency, the receiver plays $\hat a(f)$	in all but a negligible fraction of those periods. Moreover, calibration implies that the empirical distribution of states 	conditional on $f$ converges to $f$. Together, these facts imply that the sender's expected  payoff conditional on $f$ converges to $\hat u_S(f)$, where  $\hat u_S(f)=\sum_{\omega} f(\omega)\,u_S(\omega,\hat a(f))$.
	
	 Therefore, the sender's expected payoff depends only on the  distribution of forecasts induced by the calibrated  strategy.  Thus, the dynamic problem  reduces to the static persuasion problem with prior $C_\mu$ and indirect utility $\hat u_S$, and the sender's payoff is bounded above by $\mathrm{Per}(C_\mu, \hat{u}_S)$.

We now construct a strategy that attains this bound. Choose an optimal signaling policy $\pi^*$ that induces a finitely supported distribution $Q^*\in\mathcal M(C_\mu)$ (which exists since $\mathrm{Supp}(C_\mu)$ is finite). Define the forecasting strategy $\sigma^\star$ by $\sigma_t^\star(\cdot\mid p_t)=\pi^*(\cdot\mid p_t)$ for every period $t$. The sender's expected long-run average payoff equals 
	\begin{align*}
		\liminf_{T\to\infty}\frac1T\sum_{t=1}^T \mathbb{E}_{\sigma^\star,\mu}\bigl[u_S(\omega_t,a_t)\bigr]
		&=
		\liminf_{T\to\infty}\frac1T\sum_{t=1}^T \mathbb{E}_{\mu}\biggl[\sum_{f \in \mathrm{Supp}(Q^*)}\pi^*(f\mid p_t)\,\hat u_S(f)\biggr]
		\\
		&=
		\sum_{p\in D} C_\mu(p)\sum_{f\in \mathrm{Supp}(Q^*)}\pi^*(f\mid p)\,\hat u_S(f)
		\\
		&=
		\mathrm{Per}(C_\mu,\hat u_S).
	\end{align*}
The first equality follows from the definition of $\sigma^\star$ and the fact that it fails the calibration test with expected long-run frequency zero. The second follows because the empirical frequencies of conditionals converge almost surely to $C_\mu$. The last equality follows from the optimality of $\pi^*$.
	\end{proof}

We conclude this section by characterizing calibration for any stochastic process in terms of the empirical frequencies of forecasts and conditionals. We assume that the forecasting strategy takes values in a finite set.
	
	\begin{proposition} \label{proposition:mpc}
Fix an arbitrary stochastic process $\mu$. Let $\sigma$ be a forecasting strategy whose forecasts take values in a finite set $F_0 \subset F$. Then $\sigma$ is calibrated if and only if
		\begin{equation} \label{mpcproofgeneral}
			\limsup_{T \to \infty} \sum_{f \in F_0} \frac{N_T[f]}{T} \left\| f - \sum_{p \in D} p \, \frac{N_T[f,p]}{N_T[f]} \right\| = 0 \quad \mathbb{P}_{\sigma, \mu}\text{-a.s.}
		\end{equation}
		where $	N_T[f,p] = \sum_{t=1}^T \mathbf{1}_{\{ f_t = f,\, p_t=p\}}$.\footnote{If $N_T[f]=0$ in \eqref{mpcproofgeneral}, we set the entire summand equal to zero.}
	\end{proposition}

The ratio $\frac{N_T[f,p]}{N_T[f]}$ is the empirical frequency of the conditional $p$ among periods with forecast $f$, so the inner sum is the frequency-weighted average of the conditionals given $f$. Calibration requires that, in the long run, this average equals $f$. In this sense, condition~\eqref{mpcproofgeneral} intuitively mirrors a mean-preserving contraction, even though the distributions $C_\mu$ and $F_{\mu,\sigma}$ need not exist. Consequently, without further assumptions, characterizing the set of feasible distributions of forecasts under calibration is difficult.

	\subsection{Uninformed sender} \label{secuninformed}
	
	In this section, we examine the dynamic game with an uninformed sender. A seminal result of \cite{FosterVohra} shows that even an uninformed sender can construct a calibrated forecasting strategy  for any stochastic process. We characterize the payoff she can guarantee and compare it to the payoff  of an informed sender.  We first consider an adversarial environment in which nature chooses the sequence of states to minimize the sender's payoff. We then analyze the case where the unknown stochastic process is an ergodic Markov chain.

 For this  section, we focus on a less ambitious goal:  $\epsilon$-calibration.\footnote{This weaker criterion is common in literature. In fact, using $\epsilon$-calibrated strategies and the "doubling trick" one can  obtain calibrated strategies \citep{cesalugosi, mannorstoltz2010geometric, vianney}. } This relaxation allows us to restrict attention to a finite set of forecasts.  Given $\epsilon>0$,  let $F_\epsilon$ denote the set of feasible forecasts given by a regular $\epsilon$-grid:\footnote{Such finite $\epsilon$-grid discretizations of $\Delta(\Omega)$ are standard in the calibration literature (see, e.g., \citealp{FosterVohra,vianney,foster2021forecast, foster2023calibeating, hart2025calibrated}).}  
 \begin{equation*}
 	F_\epsilon:= \{ \sum_{\omega \in \Omega} n_{\omega} \delta_{\omega} \in \Delta (\Omega) \mid  n_\omega \in \{0, \frac{1}{L},\ldots, 1 \} \text{ and } \sum_{\omega \in \Omega} n_\omega =1 \}.
 \end{equation*}
 Here,  $L =\left\lceil \frac{\sqrt{|\Omega| -1} }{2 \epsilon} \right\rceil \in \mathbb{N}$.\footnote{$\left\lceil x\right\rceil$ denotes the smallest integer greater than or equal to $x$.} For each $p \in \Delta (\Omega)$, let $f^*(p) \in F_\epsilon$ denote a forecast in the $\epsilon$-neighborhood of $p$, which is unique for almost all $p$.

	We begin with the adversarial case in which nature chooses the sequence of states to minimize the sender's payoff. In this environment, nature may choose the state in each period based on the entire history of past forecasts and states.

Since we do not assume that states are drawn according to any fixed stochastic process, we adopt a pathwise performance criterion and evaluate the sender's payoff as a function of the realized empirical distribution of states. Let $\overline{\omega}_T \in \Delta (\Omega)$ denote the \textit{empirical distribution} of states up to period $T$:
\begin{equation*}
	\overline{\omega}_T:= \frac{1}{T} \sum_{i=1}^T \delta_{\omega_i}.  
\end{equation*}

The following notion of attainability formalizes this pathwise benchmark and is adapted from the online learning literature (see, e.g., \citealp{mannorTsitsiklisYu, bernsteinmanorshimkin}).
\begin{definition}
	A function $h: \Delta (\Omega) \to \mathbb{R}$ is \emph{attainable} for the uninformed sender if there exists a forecasting strategy $\sigma$ such that, for any nature's strategy $\tau$,  the following conditions hold $\mathbb{P}_{\sigma,\tau}$-almost surely:
	\begin{enumerate}
		\item
		${\displaystyle
			\liminf_{T \to \infty} \left( \frac{1}{T} \sum_{t=1}^T u_S(\omega_t,a_t) - h(\overline{\omega}_T) \right) \geq 0 
		}$,
		\item
		${\displaystyle
			\limsup_{T \to \infty} \sum_{f \in F_\epsilon} \frac{N_T[f]}{T} \left\| \overline{\omega}_T[f] - f \right\| \leq \epsilon
		}$.
	\end{enumerate}
\end{definition}

The first condition requires that whenever $\overline{\omega}_T$ converges to $p$, the
sender's realized long-run average payoff is at least $h(p)$. The second condition
requires that the forecasting strategy be $\epsilon$-calibrated.\footnote{Unlike the informed sender case, where the calibration test is applied in every period, here we require only asymptotic calibration. Throughout this section, the receiver	plays $a_t=\hat a(f_t)$ in every period.}
	
We show that the uninformed sender's maximum attainable payoff function is the closed convex hull of her indirect utility. Let $\overline{\mathrm{Co}}\, g$ denote the closed convex hull of a function $g$.\footnote{Given a function $g:X \to \mathbb{R}$, over a convex domain $X$, its closed convex hull is the function whose epigraph is  $   \overline{\mathrm{Co}}   ( \{(x,r) ): r \geq g(x) \})$ where $ \overline{\mathrm{Co}}  (X)$ is the closed convex hull of the set $X$. } 
	
\begin{theorem}\label{maxattainable}
	In the adversarial case,	$p \mapsto \overline{\mathrm{Co}}\,\hat u_S\big(f^*(p)\big)$ is the maximum attainable 	function for the uninformed sender.
\end{theorem}

This bound coincides, up to the finite-grid approximation, with the sender's worst payoff in the persuasion problem with prior supported on $\{\delta_\omega:\omega\in\Omega\}$ and mean $p$. In general, the uninformed sender cannot attain $p \mapsto \hat u_S\big(f^*(p)\big)$, which would correspond to repeatedly announcing the forecast associated with the long-run empirical distribution. Nature can use block strategies that keep the empirical distribution unchanged while forcing the sender to alternate between forecasts.

We now consider a non-adversarial environment governed by an unknown ergodic Markov chain of order $k$ (i.e., the next state depends only on the last $k$ states). In this case, the uninformed sender can attain her indirect utility.

\begin{proposition}\label{uninformedstationary}
Under an ergodic Markov chain of order $k$,
$p \mapsto \hat u_S\big(f^*(p)\big)$ is attainable for the uninformed sender.
\end{proposition}

The sender need not know in advance that the process is a Markov chain. If nature's play is
favorable, she attains the higher benchmark; otherwise she secures the adversarial guarantee.\footnote{\cite{bernsteinmanorshimkin} establish this property more generally in approachability settings.}

In summary, an uninformed sender  guarantees $\overline{\mathrm{Co}}\, \hat u_S(f^*(p))$ in the adversarial case and attains $\hat u_S(f^*(p))$ when the process is an ergodic Markov chain, where $p$ is the long-run empirical distribution of states. Together, Theorem \ref{maintheorem} and Proposition \ref{uninformedstationary} provide a benchmark for the value of information in Markov environments.

\subsection{Application: Financial Platform} \label{application}

We consider a financial forecasting platform that provides probability forecasts of a binary market state to a user. Its forecasts are used only if it passes the calibration test. The platform's payoff from a forecast depends on its precision and user engagement.

A financial platform (sender) provides daily probability forecasts about a binary market state $\omega_t \in \{H,L\}$. After observing the forecast, a user (receiver) decides whether to invest in the market. The user prefers to invest if and only if the state is $H$.\footnote{Formally, $u_R(H,b)=1$, $u_R(L,b)=-1$, and $u_R(H,db)=u_R(L,db)=0$, where $b$ and $db$ denote the actions buy and do not buy.} 

 The state evolves according to a Markov chain with transition probabilities
$\mu(\omega_{t+1}=H \mid \omega_t=H)= \mu(\omega_{t+1}=L \mid \omega_t=L)=0.95$. This Markov chain is stationary and ergodic and induces the distribution of conditionals $C_\mu$ that places equal mass on $\{0.05,0.95\}$.

We  work directly with the platform's indirect utility, which has two components. First, the platform benefits from user engagement, interpreted as time spent on the platform. Engagement is highest for  coarse forecasts and declines as forecasts become extreme. The platform's  utility   from   user engagement following forecast   $f$ (see Fig. \ref{time})  is given by: 
\begin{equation}
\hat u_S^{\mathrm{eng}}(f)=
\begin{cases}
	C(G_f) & \text{if } 0.05 \le f \le 0.95,\\
	0 & \text{otherwise},
\end{cases}
\end{equation}
where $G_f$ is a  distribution of beliefs with support $\{0.05,0.95\}$ and mean $f$,
$C(G_f)=\kappa_1\!\left(\int_0^1 L(q)\, dG_f(q)-L(f)\right)$ with $\kappa_1>0$,
and $L(q)=q\log(\frac{q}{1-q})+(1-q)\log(\frac{1-q}{q}).$\footnote{The user engagement utility can be microfounded via an information acquisition problem as in \cite{morrisstrack}. Starting from a prior belief $f$, the user acquires information until the belief crosses a $95\%$ confidence threshold. Then $G_f$ denotes the resulting distribution of posteriors, and the associated cost $C(G_f)$ is proportional to the expected change in the log-likelihood ratio}

Second, the platform derives utility from announcing precise forecasts, which we interpret as gain in reputation. The reputational utility from announcing forecast $f$ (see Fig. \ref{rep}) is given by
\begin{equation}
\hat u_S^{\mathrm{rep}}(f)=\kappa_2 (f-0.5)^2,
\end{equation}
for some constant $\kappa_2>0$.

The platform's total indirect utility from announcing forecast $f$ (see Fig. \ref{total}) is therefore
\begin{equation}
\hat{u}_S(f)=\hat u_S^{\mathrm{eng}}(f)+\hat u_S^{\mathrm{rep}}(f).
\end{equation}

The user strictly prefers  more informative forecasts. The platform, however, faces a trade-off: revealing more precise forecasts increases reputational utility but reduces user engagement. The platform's objective is to choose a calibrated forecasting strategy that maximizes its expected long-run average payoff.


\begin{figure}[h]
\centering
\begin{minipage}[b]{0.32\textwidth}
	\centering
	\includegraphics[scale=0.75, width=\textwidth]{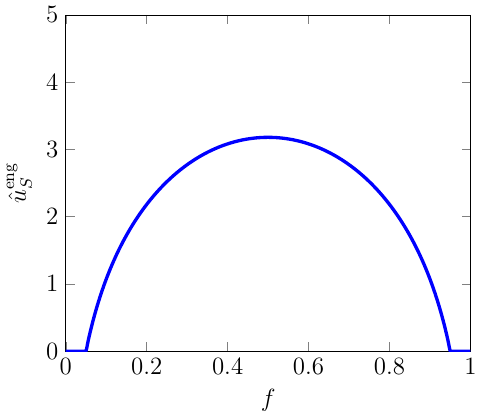}
\small FIGURE ~\refstepcounter{figure}\thefigure. $\hat{u}_S^{\mathrm{eng}}(f)$
	\label{time}
\end{minipage}
\hfill
\begin{minipage}[b]{0.32\textwidth}
	\centering
	\includegraphics[scale=0.75, width=\textwidth]{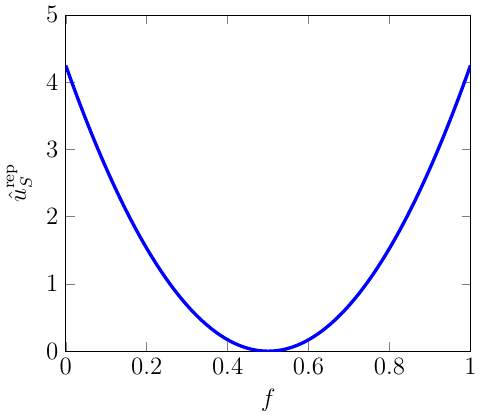}
\small FIGURE ~\refstepcounter{figure}\thefigure. $\hat{u}_S^{\mathrm{rep}}(f)$
	\label{rep}
\end{minipage}
\hfill
\begin{minipage}[b]{0.32\textwidth}
	\centering
	\includegraphics[scale=0.75, width=\textwidth]{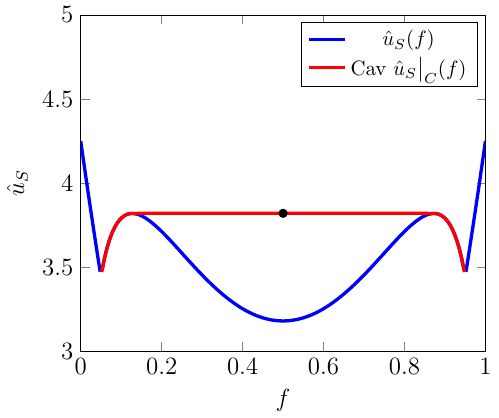}
\small FIGURE ~\refstepcounter{figure}\thefigure. $\hat{u}_S(f)$.
	\label{total}
\end{minipage}
\end{figure}

Using Theorem \ref{maintheorem}, the platform's optimal calibrated strategy achieves $\mathrm{Per}(C_\mu,\hat u_S)$. Since $\mathrm{Supp}(C_\mu)=\{0.05,0.95\}$ is affinely independent, Proposition~\ref{bayeaffine} implies that the platform's maximum payoff is
\begin{equation}
	\mathrm{Per}(C_\mu,\hat u_S)=\mathrm{Cav}\,\hat u_S\big|_{[0.05,0.95]}(0.5).
\end{equation}

This value is illustrated in Figure~\ref{total} by the black dot at $f=0.5$, where $C=\mathrm{Co}(\mathrm{Supp}(C_\mu))=[0.05,0.95]$. The distribution of conditionals $C_\mu$ and the optimal  distribution of forecasts $Q^*$ are given by:
\begin{equation*}
C_\mu= \begin{bmatrix}
	\lambda \\
	p 
\end{bmatrix}=
\begin{bmatrix}
	\frac{1}{2} & \frac{1}{2} \\
	\frac{5}{100}& \frac{95}{100} 
\end{bmatrix}
\quad \quad \quad \quad 
Q^*=
\begin{bmatrix}
	\mu \\
	q
\end{bmatrix}=
\begin{bmatrix}
	\frac{1}{2} & \frac{1}{2} \\
	\frac{15}{100}& \frac{85}{100} 
\end{bmatrix}
\end{equation*}

The distribution $Q^*$ can be induced by an optimal forecasting strategy $\sigma^\star$,
where for any $t \in \mathbb{N}$:
\begin{align*}
	\sigma^\star(f_t=15\% \mid p_t=5\%) &= \frac{8}{9}, & \sigma^\star(f_t=85\% \mid p_t=5\%) &= \frac{1}{9}, \\
	\sigma^\star(f_t=15\% \mid p_t= 95\%) &= \frac{1}{9}, & \sigma^\star(f_t=85\% \mid p_t =95\%) &= \frac{8}{9}. 
\end{align*}

The optimal calibrated strategy announces accurate but coarser forecasts  $f_t \in \{15\%,85\%\}$ rather than the truthful conditionals  $p_t \in \{5\%,95\%\}$. This corresponds to the concave envelope of 
$\hat u_S$ restricted to the support induced by $C_\mu$  (red line in Fig. \ref{fig_mainresult}). The platform would prefer to issue extreme forecasts $0\%$ and $100\%$, but these are incompatible with calibration: forecasts must lie in $\mathrm{Co}(\mathrm{Supp}(C_\mu))$.

\begin{figure}[htp!]
\centering
\includegraphics[scale=0.9]{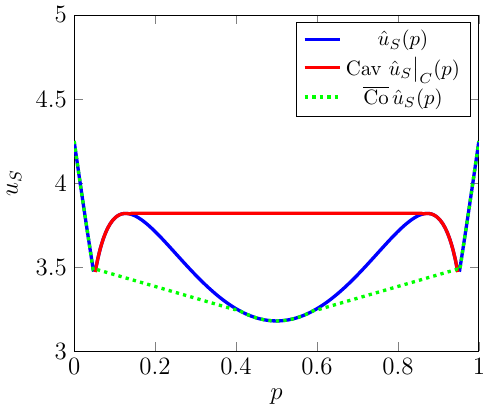}
\caption{Financial platform guarantees:
	(i) Informed sender (stationary, ergodic) - red line,
	(ii) Uninformed sender (adversarial) - green dotted line,
	(iii) Uninformed sender (ergodic Markov chain) - blue line.
}
\label{fig_mainresult}
\end{figure}

For any stochastic process, an informed platform can pass the calibration test by forecasting truthfully. Whenever $C_\mu$ is well defined, truthful forecasting achieves
$\mathbb{E}_{C_\mu} \left[\hat{u}_S\right]$. In the associated persuasion problem with prior $C_\mu$, this corresponds to the full disclosure.


In the adversarial case, an uninformed platform can approximately guarantee the closed convex hull of the indirect utility, $\overline{\mathrm{Co}}\,\hat u_S(p)$ (green dotted line), evaluated at the realized long-run empirical distribution $p$. If the process is an ergodic Markov chain, this guarantee improves to
$\hat u_S\big(m(C_\mu)\big)$ (blue line). In our application, $m(C_\mu)=0.5$. The difference between the red and blue lines  provides a benchmark for the additional value of the information provided by the Markov chain.

\section{Forecasting against regret minimizers} \label{sectionnoregret}

In this section, we consider a receiver who uses  regret minimization, instead of the calibration test,  as the heuristic for  decision-making. Regret is   an exogenous criterion to evaluate a strategy in  non-Bayesian environments \citep{cesalugosi}. It measures the difference in the receiver's realized average payoff  and what he could have obtained by repeatedly playing a fixed action.   Regret minimizing strategies ensure that this difference vanishes in the long run. Calibration and regret minimization are closely related notions in the online learning literature \citep{vianney}; we provide new insights connecting them by interpreting both as heuristics for online sequential decision-making.

Apart from this change in the receiver's behavior, the environment and timing are as in Section \ref{sectionmodel}: in each period the sender sends a forecast, the receiver chooses an action, the state is realized, and the game continues.

We establish two main results. First, we show that a receiver who myopically best responds to calibrated forecasts incurs no regret in the long run. Second, when facing a regret-minimizing receiver, the sender can guarantee at least the calibration benchmark and in some cases strictly more.

We evaluate regret conditional on each forecast, i.e., contextual regret with the forecast as the context. Intuitively, the receiver has no regret with respect to a forecast $f$ if, on the set of periods in which $f$ is sent, he cannot improve his long-run average payoff by switching to any fixed action $a^* \in A$. Let $\overline{u}_{R,T}[f]$ denote the receiver's average payoff up to period $T$ conditional on  $f$:\footnote{If $N_T[f]=0$, set $\overline{u}_{R,T}[f]=0$.}
\begin{equation*}
\overline{u}_{R,T}[f]:= \frac{1 }{N_T[f]} \sum_{t=1}^T \mathbf{1}_{\{f_t=f\}} u_R(\omega_t,a_t).
\end{equation*}

\begin{definition} 
The receiver has no regret with respect to forecast $f$ if
\begin{equation} \label{def:regret}
\limsup_{T \to \infty}  \frac{N_{T}[f]}{T} \Big( \max_{a^* \in A}u_R( \overline{\omega}_T[f],a^*)- \overline{u}_{R,T}[f] \Big) \leq 0.
\end{equation}
\end{definition}

The next proposition shows that if the receiver best responds to the forecasts of a calibrated strategy, that is, plays $a_t=\hat a(f_t)$ in every period, then he incurs no regret with respect to any forecast. This provides a justification for using calibration as a heuristic for decision-making.

	\begin{proposition} \label{calibnimpliesnoreg}
		Given a calibrated forecasting strategy, a receiver who best responds to every forecast has no regret with respect to any forecast.
	\end{proposition}

For a stationary ergodic process, Theorem \ref{maintheorem} characterizes the maximum payoff the informed sender can achieve under the calibration test. The following proposition shows that the sender can achieve the same benchmark when facing a regret-minimizing receiver.

\begin{proposition} \label{noregretQ}
For a stationary ergodic process $\mu$, the sender can guarantee $\mathrm{Per}(C_\mu, \hat{u}_S)$  against a regret-minimizing receiver.
\end{proposition}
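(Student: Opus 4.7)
The plan is to reuse (up to an arbitrarily small perturbation) the optimal forecasting strategy $\sigma^*$ from Theorem~\ref{maintheorem} and argue that the receiver's no-regret hypothesis forces him to play $\hat a(f)$ on a fraction $1-o(1)$ of the periods in $\mathbb{N}_T[f]$ for every forecast $f$ in the support of the induced distribution. Once this is established, the sender's long-run payoff coincides with her payoff under the calibration test, namely $\textit{Per}(C_\mu,\hat u_S)$.

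Fix $\delta>0$. I first construct a perturbed signaling policy $\tilde\pi$ whose induced forecast distribution $\tilde Q\in\mathcal{M}(C_\mu)$ is supported only on forecasts $f$ for which $\hat a(f)$ is the unique receiver best response with a positive gap $\Delta_f>0$, and such that $\mathbb{E}_{\tilde Q}[\hat u_S]\geq\textit{Per}(C_\mu,\hat u_S)-\delta$. The sender plays $\tilde\sigma_t(f\mid p_t=p)=\tilde\pi(f\mid p)$ in every period $t$. Since $\tilde\sigma$ depends only on the current conditional $p_t$, the joint process $(p_t,f_t,\omega_t)$ inherits stationarity and ergodicity from $\mu$, and by the ergodic theorem, almost surely for every $f\in Supp(\tilde Q)$,
\begin{align*}
\frac{|\mathbb{N}_T[f]|}{T}\longrightarrow\tilde Q(f)>0,\qquad \overline{\omega}_T[f]\longrightarrow f,
\end{align*}
where the second convergence uses that $\tilde\pi$ is a mean-preserving contraction.

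For $T$ large enough, $\overline{\omega}_T[f]$ is close enough to $f$ that every action $a\neq\hat a(f)$ satisfies $u_R(\overline{\omega}_T[f],\hat a(f))-u_R(\overline{\omega}_T[f],a)\geq\Delta_f/2$. Substituting into the receiver's no-regret condition and using linearity of $u_R$ in its first argument gives
\begin{align*}
0\geq\limsup_T\frac{|\mathbb{N}_T[f]|}{T}\bigl(u_R(\overline{\omega}_T[f],\hat a(f))-\overline{u}_{R,T}[f]\bigr)\geq\frac{\Delta_f}{2}\,\tilde Q(f)\limsup_T\sum_{a\neq\hat a(f)}\frac{n_T(f,a)}{|\mathbb{N}_T[f]|},
\end{align*}
where $n_T(f,a):=|\{t\in\mathbb{N}_T[f]:a_t=a\}|$. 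Hence the fraction of periods in $\mathbb{N}_T[f]$ on which the receiver plays any action other than $\hat a(f)$ vanishes almost surely, so the sender's average payoff on $\mathbb{N}_T[f]$ converges to $\hat u_S(f)$, and the total long-run payoff to $\mathbb{E}_{\tilde Q}[\hat u_S]\geq\textit{Per}(C_\mu,\hat u_S)-\delta$. Sending $\delta\to 0$ finishes the argument.

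The main obstacle is the construction of $\tilde\pi$ in the first step with support only on strict-best-response forecasts while losing at most $O(\delta)$, since $\hat u_S$ is only upper semicontinuous and can jump at forecasts where the receiver is indifferent (tie-break being in favor of the sender). I would handle this by starting from a finitely supported near-optimal $Q^*\in\mathcal{M}(C_\mu)$ (whose existence follows from Section~\ref{subsectionpersuasiongame}) and replacing each tie-point atom $f_i$ in its support by a two-point distribution $(f_i^+,f_i^-)$, with $f_i^+$ chosen on the side where $\hat a(f_i)$ becomes the strict best response and carrying nearly all the original mass, compensated by a small mass at $f_i^-$ to preserve the posterior mean. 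Continuity of $u_S(\cdot,a)$ in its first argument then yields $\mathbb{E}_{\tilde Q}[\hat u_S]\geq\mathbb{E}_{Q^*}[\hat u_S]-O(\delta)$, and a standard telescoping keeps $\tilde Q$ a mean-preserving contraction of $C_\mu$.
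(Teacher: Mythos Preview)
Your core argument coincides with the paper's: run the persuasion-optimal signaling policy $\pi^*$ as the forecasting strategy, so that $\overline{\omega}_T[f]\to f$ for each forecast $f$ in its support, and deduce from the contextual no-regret hypothesis that the receiver plays a best response to $f$ on all but a vanishing fraction of $\mathbb{N}_T[f]$. The paper does not perturb; it directly concludes $a_t=\hat a(f)$ almost surely and reads off the value $\textit{Per}(C_\mu,\hat u_S)$.

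The gap is in your additional tie-breaking step. Replacing the atom at a tie point $f_i$ by a two-point distribution $(f_i^+,f_i^-)$ with barycenter $f_i$ is a mean-preserving \emph{spread} of $Q^*$: you obtain $Q^*\preceq\tilde Q$ in the convex order, and together with $Q^*\preceq C_\mu$ this does \emph{not} yield $\tilde Q\preceq C_\mu$. Concretely, when honest forecasting is optimal ($Q^*=C_\mu$, e.g.\ whenever $\hat u_S$ is convex on $Co(Supp(C_\mu))$), no nontrivial spread of $Q^*$ lies in $\mathcal{M}(C_\mu)$, so your construction is infeasible precisely in a case where a support point of $Q^*$ may sit on a best-response boundary. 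The phrase ``a standard telescoping'' does not address this. If you want a valid perturbation you must work at the level of the stochastic matrix $G$ implementing $Q^*$ from $C_\mu$ (Definition~\ref{defsmpc}): perturb its columns so that each new forecast equals its own posterior mean under the perturbed $G$ and lands in the interior of the desired best-response cell. That is a different operation from splitting an atom into two, and you would still need to argue that the strict-best-response direction for $\hat a(f_i)$ is reachable within $Co(Supp(C_\mu))$, which can fail when $f_i$ is an extreme point of that hull.
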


As in Theorem \ref{maintheorem}, the sender can use the optimal signaling policy
$\sigma_t(\cdot \mid p_t)=\pi^*(\cdot \mid p_t)$ in every period $t$. This strategy is calibrated, so for any forecast $f$ sent with positive frequency, the empirical distribution conditional on $f$ converges to $f$ almost surely. Hence, any action that is not a best response to $f$  would result in a strictly lower  payoff 
conditional on $f$, and thus result in positive regret.  To avoid regret, the receiver must play a best response to $f$ in the long run with frequency one. Hence, the sender can achieve at least $\mathrm{Per}(C_\mu,\hat{u}_S)$.

We have shown that the sender can always do at least as well against a regret-minimizing receiver as under the calibration benchmark. We now provide an example where she can  strictly do better.  We follow the approach used by \cite{DengSivanregret} and \cite{Braverman2018} and focus on the natural class of \textit{mean-based learning algorithms}.  This class of no-regret strategies includes Multiplicative Weights algorithm, the Follow-the-Perturbed-Leader algorithm, and the EXP3 algorithm. Intuitively, mean-based strategies  play the  action that historically performs the best.   For the next result, we assume the receiver uses a mean-based learning algorithm for a $T$-period game, for  large $T$.

\begin{definition}
Let $\sigma_{a,t}^f = \sum_{i=1}^t \boldsymbol{1}_{\{f_i=f\}} u_R(\omega_i,a)$ be the receiver's cumulative payoff from action  $a$ over the first $t$ periods when the forecast was $f$. An algorithm is $\gamma$-mean-based  if for every forecast $f$, whenever $\sigma_{a,t}^f< \sigma_{b,t}^f- \gamma T$ for some $b \in A$, we have $\mathbb{P}(a_{t+1}=a \mid f_{t+1}=f) \le \gamma$.  An algorithm is mean-based if it is $\gamma$-mean-based for some $\gamma = o(1)$.
\end{definition}

The next theorem provides an example in which the sender can obtain a payoff strictly higher than the calibration benchmark against a mean-based receiver.

\begin{theorem} \label{regretmuchmore}
 There exists a stationary ergodic process $\mu$ and a game such that, against a mean-based receiver, the sender can attain a payoff strictly greater than $ \mathrm{Per}(C_\mu, \hat{u}_S)$.
\end{theorem}

\begin{proof} Consider the  following payoff matrix, which represents $u_S(\omega,a)$ and $u_R(\omega,a)$:
	
\begin{center}
\begin{tabular}{cc|c|c|c|c|}
	& \multicolumn{1}{c}{}  & \multicolumn{1}{c}{$a_1$} & \multicolumn{1}{c}{$a_2$}  & \multicolumn{1}{c}{$a_3$} & \multicolumn{1}{c}{$a_4$} \\\cline{3-6}
	& $0$  & $(2,8)$ &  $(0,7)$ & $(4,3)$ &  $(2,0)$  \\\cline{3-6}
	& $1$ & $(2,0)$ & $(4,3)$ & $(0,7)$ &  $(2,8)$  \\\cline{3-6}
\end{tabular}
\end{center}

The receiver's best response depends on his belief  $p = \mathbb{P}(\omega=1)$. It is optimal to play $a_1$ for $p \in [0,0.25]$,  $a_2$ for $p \in [0.25,0.5]$, $a_3$ for $p \in [0.5,0.75]$, and $a_4$ otherwise  (see Fig. \ref{utilityregret}).

\begin{figure}[htp!]
\begin{minipage}[b]{0.45\textwidth} 
	\includegraphics[scale=0.9]{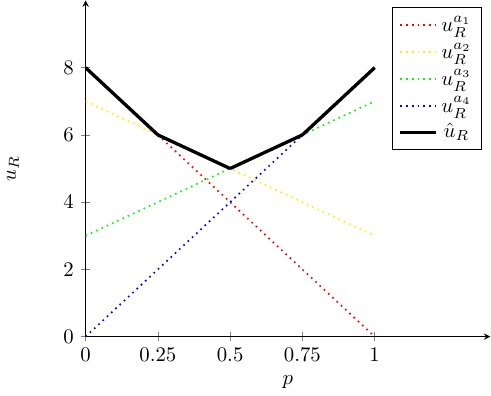}
	\caption{Receiver's indirect utility: $\hat{u}_R$} 
	\label{utilityregret}
\end{minipage}
\hfill
\begin{minipage}[b]{0.45\textwidth}
	\includegraphics[scale=0.9]{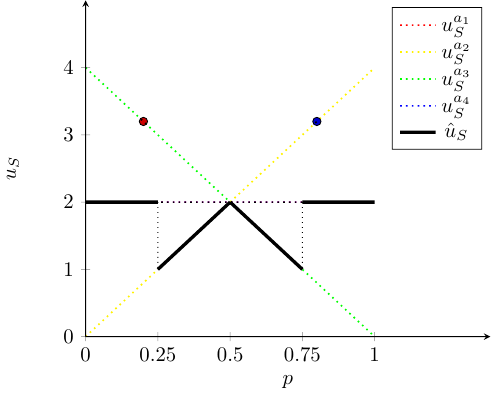}
	\caption{Sender's indirect utility: $\hat{u}_S$}
	\label{senderutilityregret}
\end{minipage}
\end{figure}

The state evolves according to a Markov chain with transition probabilities $\mu(\omega_{t+1}=1 \mid \omega_t=1)=0.8$ and $\mu(\omega_{t+1}=1 \mid \omega_t=0)=0.2$. Hence, the  distribution  of conditionals $C_\mu$ has equal mass on $20\%$ and $80 \%$. An optimal signaling policy in the persuasion problem is full  or no disclosure, which implies $\mathrm{Per}(C_\mu, \hat{u}_S)= 2$ (see Fig. \ref{senderutilityregret}).

We now describe a strategy that allows the sender to guarantee a strictly 
higher payoff against a mean-based receiver. The construction exploits the fact that mean-based algorithms adapt gradually over time.

Because the receiver uses a mean-based learning algorithm, his behavior is driven by cumulative payoffs. Conditional on forecast $f$, the cumulative payoff from action $a$ up to period $t$ is $\sigma^f_{a,t}=N_t[f] \cdot u_R(\overline{\omega}_t[f],a)$. Thus, actions that yield higher payoff under the empirical distribution $\overline{\omega}_t[f]$ accumulate larger cumulative payoffs. Hence, with high probability, the receiver plays a best response to the empirical distribution $\overline{\omega}_t[f]$.

The game has two phases, and the sender uses only two forecasts: $l$ and $h$. During  the first $\frac{T}{2}$ periods, the sender  announces $l$ when $ p_t=20\%$ and  $h$  when $p_t=80 \%$.  Under a mean-based learning algorithm,  the receiver plays $a_1$ after $l$ and $a_4$ after $h$ with high probability.  For large $T$, the sender's average payoff in this first phase is approximately $2$.

 In the remaining $\frac{T}{2}$ periods,  the sender swaps the forecasts: she announces $l$ when $p_t=80\%$ and  $h$ when $p_t=20 \%$. After this change, the empirical frequency  given $l$ 
 gradually increases from $20\%$ toward $50\%$. Once it exceeds $25\%$, 
 action $a_2$ becomes more profitable than $a_1$, and the receiver switches 
 to $a_2$ with high probability.  A similar argument applies to  $h$: 
 as the empirical frequency decreases, the receiver eventually switches 
 from $a_4$ to $a_3$.

For most of this second phase, the receiver   plays $a_2$ and $a_3$ 
when the  conditional distributions are $80\%$ and $20\%$, respectively. 
These actions are not best responses to the conditional beliefs and 
therefore result in payoffs outside the graph of indirect utilities 
(see the red and blue dots in Fig. \ref{senderutilityregret}). A direct calculation shows that the 
sender's average payoff in this phase is approximately $\frac{34}{11}$. Combining both phases,  the overall average payoff is approximately $\frac{28}{11} > 2$. 
 Thus, the sender achieves strictly more than the calibration benchmark.
\end{proof}

This example holds even in the case when the receiver  does not observe the state $(\omega_t)_{t \geq 1}$ and only observes the payoff from the chosen action. Using standard tools from  bandit problems, such as inverse propensity score estimator, it is possible to build unbiased estimators of the reward of unplayed actions \citep{Bubeck}. These techniques rely on small random perturbations of the receiver's strategy, which ensure that the cost of estimation vanishes in the long run (see \citealp{lattimore2020bandit}).

\section{Conclusion} \label{sectionconclusion}
We studied a dynamic forecasting game in which the sender is subject to a calibration test. For stationary and ergodic processes, we characterized the optimal calibrated forecasting strategy by transforming the dynamic problem into a static persuasion problem. We compared what informed and uninformed senders can attain. Finally, we examined regret minimization as an alternative heuristic for decision-making and showed how it relates to calibration. Together, these results clarify the role of calibration and its connection to persuasion and online learning.

Many problems remain open for the setting that we study. In particular, characterizing the optimal calibrated strategy for any stochastic process. Since this problem might be intractable, we could investigate, for which classes of stochastic processes, one can (or cannot) find a calibrated strategy that does better than truthful forecasting. Also, the complete characterization for attainable payoffs against no-regret learners remains open. 

The model admits natural extensions. The receiver may use alternative statistical tests to assess credibility. This raises the question of which tests make truthful forecasting optimal and which limit the scope for strategic misreporting. One can also let the receiver's actions affect state transitions\textemdash for example, in a Markov Decision Process. An interesting question then is to characterize which distributions of forecasts are feasible under calibration in that setting.

\bibliographystyle{plainnat} 
\bibliography{EC}


\begin{appendix}

\section{Omitted Results and  Proofs} \label{proofs}

\subsection{Proposition \ref{propseqoferror}}

\begin{proposition} \label{propseqoferror}
There exists a sequence of error margins $(\epsilon_t)_{t \geq 1}$ such that $ \epsilon_t \to 0$ and  the truthful forecasting strategy  only fails the calibration test  finitely many times almost surely.
\end{proposition}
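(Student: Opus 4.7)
The plan is to exploit the martingale structure that is available when the sender forecasts honestly, combine exponential concentration with the Borel--Cantelli lemma, and then choose $\epsilon_T$ shrinking slowly enough that the failure probabilities remain summable.

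First I would fix the honest strategy $f_t = p_t = \mu(\cdot \mid \omega^t)$ and, for each $f \in D$ and each $\omega \in \Omega$, introduce the real-valued increments
\begin{align*}
Y_t^{f,\omega} \;=\; \mathbf{1}_{\{p_t = f\}} \bigl(\mathbf{1}_{\{\omega_t = \omega\}} - f(\omega)\bigr).
\end{align*}
Because $p_t$ is $\sigma(\omega^t)$-measurable and $\mathbb{E}[\mathbf{1}_{\{\omega_t = \omega\}} \mid \omega^t] = p_t(\omega)$, the sequence $(Y_t^{f,\omega})_{t \ge 1}$ is a bounded martingale difference sequence (bounded by $1$) with respect to the natural filtration. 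Azuma--Hoeffding then gives, for every $\eta > 0$,
\begin{align*}
\mathbb{P}\!\left(\Bigl|\textstyle\sum_{t=1}^T Y_t^{f,\omega}\Bigr| \geq T\eta\right) \;\leq\; 2\exp(-T\eta^2/2).
\end{align*}

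The second step is to turn these coordinate bounds into a bound on the calibration error. Writing $K_T = \sum_{f \in F} \frac{|\mathbb{N}_T[f]|}{T} \|\bar{\omega}_T[f] - f\|$, one has
\begin{align*}
K_T \;=\; \frac{1}{T}\sum_{f \in D} \Bigl\|\textstyle\sum_{t=1}^T \mathbf{1}_{\{p_t=f\}}(\delta_{\omega_t}-f)\Bigr\| \;\leq\; \frac{\sqrt{|\Omega|}}{T}\sum_{f \in D}\max_{\omega}\Bigl|\textstyle\sum_{t=1}^T Y_t^{f,\omega}\Bigr|,
\end{align*}
so a union bound over the finite sets $D$ and $\Omega$ yields
\begin{align*}
\mathbb{P}(K_T \geq \epsilon) \;\leq\; 2|D|\,|\Omega|\,\exp\!\left(-\,\frac{T\epsilon^2}{2|\Omega|\,|D|^2}\right).
\end{align*}

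The final step is the choice of $\epsilon_T$. Take $\epsilon_T = T^{-1/4}$ (any exponent in $(0,1/2)$ works); clearly $\epsilon_T \to 0$, and
\begin{align*}
\sum_{T \geq 1} \mathbb{P}(K_T \geq \epsilon_T) \;\leq\; 2|D|\,|\Omega|\sum_{T \geq 1}\exp\!\left(-\frac{T^{1/2}}{2|\Omega|\,|D|^2}\right) \;<\; \infty.
\end{align*}
The Borel--Cantelli lemma then implies that $K_T < \epsilon_T$ for all but finitely many $T$, almost surely; i.e.\ the honest sender fails the $\epsilon_T$-calibration test only finitely often. One may have to adjust the first few $\epsilon_T$ upward (they can be taken to be $1$) so that the test is well-defined even before concentration kicks in, but this does not affect the limit.

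The only genuine subtlety is the martingale step: one must be careful that honest forecasting makes $\mathbf{1}_{\{f_t = f\}}$ previsible, which is precisely what fails for an arbitrary (non-honest) calibrated strategy; for a general strategy one would instead need the stronger concentration tools used to prove Foster--Vohra--style calibration. Everything else is standard Azuma plus Borel--Cantelli bookkeeping.
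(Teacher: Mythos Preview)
Your proposal is correct and follows essentially the same approach as the paper: exploit that honest forecasting makes the indicator $\mathbf{1}_{\{f_t=f\}}$ previsible, apply Azuma--Hoeffding to the resulting martingale differences, take a union bound over the finite set $D$ (and $\Omega$), and invoke Borel--Cantelli with a slowly vanishing $\epsilon_T$. The only cosmetic difference is that the paper cites a vector-valued concentration inequality (Lemma~3 of \cite{foster2011complexity}) in place of your coordinate-wise scalar Azuma bound, and it writes $\epsilon_T = o(T^{-1/3})$ rather than your $T^{-1/4}$; both choices lead to summable tail probabilities.
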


\begin{proof} Suppose the  sender uses the truthful forecasting strategy, that is, $f_t=p_t$ for all $t \in \mathbb{N}$. For each $p \in D$, where $|D|<\infty$, let
\[
d_t^{p}=\mathbf{1}_{\{p_t=p\}}(\delta_{\omega_t}-p),
\qquad
e_T^{p}=\frac{1}{T}\sum_{t=1}^T d_t^{p}.
\]

 If  the calibration error $ \sum_{p \in D} \| e^p_T \|$ is greater than the error margin $\epsilon_T$, then the sender fails the calibration test in period $T$. Note that  $(d^p_t)_{t \geq 1}$ is a martingale  difference sequence, where  $\mathbb{E}[d^p_t \mid \omega_1, \ldots, \omega_{t-1}]=\boldsymbol{0}$ and  $\|d^p_t\|\leq 2$ almost surely. Using Lemma 3 of \cite{foster2011complexity},  there exists a constant $z >0$ such that  for every $T$ and every $p\in D$, 
 \begin{equation} \label{eq:foster2011}
 \mathbb{P}(\| e^p_T \| \geq \theta ) \leq 2 e^{-zT\theta^2}.
\end{equation}

We now bound the probability that the truthful sender fails the calibration test
in period  $T$. Define the failure event
\[
F_T:=\left\{\max_{p\in D}\|e_T^{p}\|\ge \frac{\epsilon_T}{|D|}\right\}.
\]
If $\sum_{p\in D}\|e_T^{p}\| \ge \epsilon_T$, then there must exist some $p\in D$ such that $\|e_T^{p}\|\ge \epsilon_T/|D|$. Hence, $\mathbb P(\sum_{p\in D}\|e_T^{p}\| \ge \epsilon_T)\le \mathbb P(F_T)$.

Using \eqref{eq:foster2011} for $\theta=\frac{\epsilon_T}{|D|}$, we bound the probability that the error is large for any $p \in D$.
\[
 \mathbb P(F_T)=\mathbb P \left(\bigcup_{p\in D}\left\{\|e_T^p\|\ge \frac{\epsilon_T}{|D|}\right\}\right) \leq \sum_{p \in D} \mathbb P \big(\| e_T^p \| \geq \frac{\epsilon_T}{|D|} \big)
\le  2 |D|e^{-\frac{zT \epsilon_T^2}{|D|^2 }}.
\]
Taking $\epsilon_T=|D|\sqrt{\frac{(2/z)\log T}{T}}$ suffices to prove the result. We have
\[
\mathbb P\left(\sum_{p\in D}\|e_T^{p}\| \ge \epsilon_T\right) \leq \mathbb{P}(F_T) \leq \frac{2|D|}{T^2}.
\]

Since $\sum_{T=1}^{\infty} \mathbb{P}(F_T) < \infty$, the Borel--Cantelli lemma implies that the truthful forecasting strategy fails the calibration test only finitely many times almost surely.
\end{proof}

\subsection{Proposition \ref{highpunishmentcost}}

\begin{proposition} \label{highpunishmentcost}
Let $\mu$ be a stationary ergodic process and let $I_t$ denote the indicator of passing the calibration test in period $t$. If the punishment cost satisfies $c > (1+\sqrt{2}) L$ where  $L= \max_{a \in A} \|u_S(\cdot,a)\|, $   then any optimal forecasting strategy $\sigma^\star$ 
\begin{equation} \label{assumption2}
	\liminf_{T\to\infty}\mathbb{E}_{\sigma^\star,\mu}\left[\frac1T \sum_{t=1}^T I_t\right]=1.
\end{equation}
\end{proposition}

\begin{proof}
We prove by contradiction.	Assume that an optimal forecasting strategy $\sigma^\star$  fails the calibration test in a positive expected long-run fraction of periods.

Recall that for each forecast $f$,  $N_T[f]$ denotes  the number of periods $f$ is sent up to $T$.	Let $s_T^+[f] \in[0,1]$ be the fraction of those periods (when the forecast was $f$) in which the calibration test is 	passed. Denote by $\overline{\omega}_T^+[f]$ and $\overline{\omega}_T^-[f]$  the empirical distribution of the states up to period $T$ when the forecast was $f$, conditional on the test being passed and failed, respectively. Then,
\begin{equation} 	\label{eq:decomp-pass-fail}   \overline{\omega}_T[f]=s_T^+[f]\overline{\omega}_T^+[f]+(1-s_T^+[f])\overline{\omega}_T^-[f].
\end{equation}
	
The sender's average payoff up to period $T$ can be decomposed into contributions from periods in which the calibration test is passed and periods in which it is failed:
\begin{equation*}
\frac{1}{T}\sum_{t=1}^{T} u_S(\omega_t,a_t)	=\sum_{f}\frac{N_T[f]}{T}\Big(s_T^+[f]u_S(\overline{\omega}^+_T[f],\hat a(f))-(1-s_T^+[f])c	\Big).
\end{equation*}

	Consider any period $T$ at which the calibration  test is passed with  margin $\epsilon_T$. By definition,
		\begin{equation*}
	\sum_{f \in F} \frac{N_T[f]}{T}\|\overline{\omega}_T[f]-f\| \leq \epsilon_T.
\end{equation*}

From \eqref{eq:decomp-pass-fail}, it follows
\begin{equation*}
s_T^+[f] \big(\overline{\omega}_T^+[f]-f \big)= \big(\overline{\omega}_T[f]-f \big) - (1-s_T^+[f]) \big(\overline{\omega}_T^-[f]-f \big)
\end{equation*}
Using triangle inequality,  multiplying by $\frac{N_T[f]}{T}$ and summing over $f$, we obtain
\begin{equation*}
	\sum_f \frac{N_T[f]}{T} s_T^+[f]\| \overline{\omega}^+_T[f]-f \| \leq \sum_f \frac{N_T[f]}{T} \big( \| \overline{\omega}_T[f]-f \| +   (1-s_T^+[f])  \| \overline{\omega}_T^{-}[f] -f\| \big).
\end{equation*}

Using the calibration inequality and the bound $\|\bar\omega_T^-[f]-f\|\le \sqrt{2}$ for all $f$ under the Euclidean norm, we obtain
\begin{equation} \label{eq:pass-empirical-bound}
	\sum_f \frac{N_T[f]}{T} s_T^+[f]\| \overline{\omega}^+_T[f]-f \|  \leq \epsilon_T  + \sum_f \frac{N_T[f]}{T} \big( \sqrt{2}  (1-s_T^+[f])\big)
\end{equation}

Using Lipschitz continuity with constant $L$, for each $f\in F$,
\begin{equation*}
u_S(\overline{\omega}^+_T[f],\hat a(f))
\leq
\hat{u}_S(f) + L\|\overline{\omega}^+_T[f]-f\|,
\end{equation*}
where $\hat{u}_S(f):=\sum_{\omega} f(\omega)u_S(\omega,\hat a(f))$.

Multiplying both sides by $\frac{N_T[f]}{T} s_T^+[f]$, summing over $f$, and using \eqref{eq:pass-empirical-bound}, we obtain
	\begin{equation*}
		\sum_f \frac{N_T[f]}{T} s_T^+[f]\,u_S(\overline{\omega}^+_T[f],\hat a(f))
		\le
		\sum_f \frac{N_T[f]}{T}s_T^+[f]\hat{u}_S(f)
		+
		L\epsilon_T
		+
		\sqrt{2}L\sum_f \frac{N_T[f]}{T}(1-s_T^+[f]).
	\end{equation*}

 Since $|\hat u_S(f)|\le L$ and $s_T^+[f]\in[0,1]$, we have
 \[
 s_T^+[f]\hat u_S(f)
 \le
 \hat u_S(f)+L(1-s_T^+[f]).
 \]
 
Substituting into the previous inequality, we obtain
\begin{equation} \label{eq:passed-payoff}
	\sum_f \frac{N_T[f]}{T} s_T^+[f]\,u_S(\overline{\omega}^+_T[f],\hat a(f))
	\le
	\sum_f \frac{N_T[f]}{T}\hat{u}_S(f)
	+	L\epsilon_T
	+
	(1+\sqrt{2})L\sum_f \frac{N_T[f]}{T}(1-s_T^+[f]).
\end{equation}

Using \eqref{eq:passed-payoff}, for every $T$ at which the calibration test is passed, the sender's average payoff satisfies
	\[
 \frac{1}{T} \sum_{t=1}^T u_S(\omega_t,a_t)
\le
\sum_f \frac{N_T[f]}{T}\hat{u}_S(f)
+L\epsilon_T
+
\left[(1+\sqrt{2})L -c \right]\sum_f \frac{N_T[f]}{T}(1-s_T^+[f]).
\]

Since $c>(1+\sqrt{2})L$, the coefficient of $\sum_f \frac{N_T[f]}{T}(1-s_T^+[f])$ in the bound above is strictly negative. 

Consider a subsequence $(T_k)_{k\geq 1}$ of periods along which the calibration test is passed. We obtain 
\[
\mathbb{E}\left[\frac{1}{T_k}\sum_{t=1}^{T_k} u_S(\omega_t,a_t)\right]
\le
\mathbb{E}\left[
\sum_f \frac{N_{T_k}[f]}{T_k}\,\hat{u}_S(f)+L\,\epsilon_{T_k}
\right].
\]
Taking $\liminf$ as $k\to\infty$ and using $\epsilon_{T_k}\to0$, we obtain
\begin{equation}\label{upperbound}
	\liminf_{T\to\infty}\mathbb{E}\left[\frac1T\sum_{t=1}^T u_S(\omega_t,a_t)\right]
	\le
	\liminf_{k\to\infty}\mathbb{E}\left[\sum_f Q_k(f)\hat u_S(f)\right],
\end{equation}
where $Q_k(f):=N_{T_k}[f]/T_k$ denotes the empirical distribution of forecasts up to $T_k$. Thus, the right-hand side of \eqref{upperbound} bounds the sender's expected long-run  payoff under $\sigma^\star$.

Now suppose $\mu$ is stationary and ergodic, so the distribution
of conditionals $C_\mu$ exists and is constant almost surely (Lemma \ref{stationaryergodic}).

Since the test is passed at $T_k$ and $\epsilon_{T_k}\to 0$, the joint empirical frequencies of conditionals and forecasts at time $T_k$ satisfy the mean-preserving contraction constraints up to an error that vanishes as $k\to\infty$ (by the argument in the proof of Lemma \ref{mpclemmastationary}). Hence, every limit point of $(Q_k)$ belongs to $\mathcal M(C_\mu)$. By compactness of $\Delta(F)$, there exists a subsequence $(Q_{k_j})_{j\ge1}$ that converges to some $Q\in\mathcal M(C_\mu)$ and along which the $\liminf$ in
\eqref{upperbound} is attained.

Let $Q^*\in\arg\max_{Q'\in\mathcal M(C_\mu)}\mathbb E_{Q'}[\hat u_S]$. Since
$\mathrm{Supp}(C_\mu)$ is finite, we may choose such a maximizer $Q^*$ with finite
support. By Lemma \ref{mpclemmastationary}, there exists a calibrated strategy $\tilde\sigma$, that induces $Q^*$ and  fails the calibration test with expected long-run frequency zero. Hence, under $\tilde\sigma$, the sender's expected long-run payoff equals $\mathbb E_{Q^*}[\hat u_S]$.

Thus, the right-hand side of \eqref{upperbound} is at most $\mathbb E_{Q^*}[\hat u_S]$, and since
$c>(1+\sqrt2)L$ any strategy with a positive expected long-run failure frequency incurs a strictly positive
payoff loss.   Therefore, an optimal forecasting strategy must satisfy \eqref{assumption2}.
\end{proof}

\subsection{Proof of Proposition \ref{bayeaffine}}

The proof relies on a characterization of the set of mean-preserving
contractions  when the support of $P$ is affinely independent. In this case, a distribution $Q$ is a
 mean-preserving contraction if and only if two conditions hold: (i) the mean is preserved, $m(P)=m(Q)$, and (ii) every $q \in \mathrm{Supp}(Q)$ lies in the convex hull of $\mathrm{Supp}(P)$. The feasibility reduces to Bayes plausibility on the restricted domain. As a result, the persuasion value  is given by the concave envelope of $\hat{u}_S$ restricted to $\mathrm{Co}(\mathrm{Supp}(P))$.

\begin{lemma} \label{affine}
Suppose $\mathrm{Supp}(P)$ is affinely independent, then
\begin{equation*}
Q \in \mathcal{M}(P) \iff \mathrm{Supp}(Q) \subseteq \mathrm{Co}(\mathrm{Supp} (P))  \text{ and } m(P)=m(Q). 
\end{equation*}
\end{lemma}

\begin{proof} ($\Leftarrow$) Let $Q$ be a finite distribution with $\mathrm{Supp}(Q)\subseteq \mathrm{Co}(\mathrm{Supp}(P))$ and $m(P)=m(Q)$. For any $q_j \in \mathrm{Supp}(Q)$, we can write it as a convex combination of $\mathrm{Supp}(P)$. 
\[
q_j=\sum_{i=1}^n \alpha_{ij}p_i, \qquad 
\alpha_{ij}\ge0,\ \sum_{i=1}^n\alpha_{ij}=1.
\]

Under our assumption, $m(P)=m(Q)$, i.e., $\sum_{i=1}^n \lambda_i p_i = \sum_{j=1}^m \mu_j q_j.$ This implies
\begin{equation*}
 \sum_{i=1}^n \big( \lambda_i - \sum_{j=1}^m \mu_j \alpha_{ij} \big)p_i=0.
\end{equation*}
Since $\mathrm{Supp}(P)$ is affinely independent and the coefficients satisfy
$\sum_{i=1}^n \big( \lambda_i - \sum_{j=1}^m \mu_j \alpha_{ij} \big)=0$, it follows that
\[
\lambda_i = \sum_{j=1}^m \mu_j \alpha_{ij}
\quad \forall i \in \{1,\ldots,n\}.
\]
 For each $i$ such that $\lambda_i>0$, define
 \[ G_{ij}= \frac{\mu_j \alpha_{ij}}{ \lambda_i}.\]
Then  $Q$ is a simple mean-preserving contraction of $P$ as it satisfies  \eqref{smpc} and \eqref{smpc2}:
\begin{align*}
\sum_{i=1}^n \lambda_i G_{ij} &= \sum_{i=1}^n \mu_j \alpha_{ij} = \mu_j,\\
\sum_{i=1}^n \lambda_i p_i G_{ij} &= \sum_{i=1}^n \mu_j p_i \alpha_{ij} = \mu_j q_j.
\end{align*}

($\Rightarrow$) Each $q_j$ is obtained by merging weights from $\mathrm{Supp}(P)$, so $q_j \in \mathrm{Co}(\mathrm{Supp}(P))$. It remains to show that $m(P)=m(Q)$. As $Q \in \mathcal{M}(P)$, there exists a row-stochastic matrix $G$ such that
\begin{equation*}
\mu_j q_j = \sum_{i=1}^n \lambda_i p_i G_{ij}  \quad \forall j \in \{1,\ldots,m \}.
\end{equation*}
Summing over $j$ gives
\begin{equation*}
m(Q) =  \sum_{j=1}^m  \mu_j q_j = \sum_{j=1}^m\sum_{i=1}^n \lambda_i p_i G_{ij} = \sum_{i=1}^n \lambda_i p_i = m(P).
\end{equation*}
where the third equality uses that the matrix $G$ is row-stochastic.

The argument establishes the equivalence for simple mean-preserving contractions. Since $\mathcal{M}(P)$ is the weak closure of the set of simple mean-preserving contractions and both properties are preserved under limits, the result extends to all $Q\in\mathcal{M}(P)$.
\end{proof}

\subsection{Proof of Proposition \ref{proposition:mpc}}

($\Rightarrow$) First, we show that if the forecasting strategy $\sigma$ is calibrated, then  $\eqref{mpcproofgeneral}$ holds. Using the triangle inequality, for any $f \in F_0$, 
\begin{equation*}
 \frac{N_T[f]}{T} \|f- \sum_{p \in D} p \frac{N_T[f,p]}{N_T[f]} \|   
\leq   \frac{ N_T[f]}{T} \|f-   \overline{\omega}_T[f] \| +  \| \frac{1}  {T} \sum_{t=1}^T  \boldsymbol{1}_{\{f_t=f \}} \bigl(\delta_{\omega_t}-p_t\bigr)  \|. 
\end{equation*}

The first term converges to zero since $\sigma$ is calibrated. For the second term, define
\[
X_t:=\mathbf 1_{\{f_t=f\}}\bigl(\delta_{\omega_t}-p_t\bigr).
\]
Then $(X_t)_{t\ge 1}$ is a bounded martingale difference sequence with respect to the natural filtration. Hence, by the strong law of large numbers for bounded martingale differences,
\[
\frac1T\sum_{t=1}^T X_t \to 0
\qquad \mathbb P_{\mu,\sigma}\text{-a.s.}
\]
Since $F_0$ is finite, we may take the intersection of these
probability one events over all $f \in F_0$. Therefore,
\[
\limsup_{T\to\infty} \sum_{f \in F_0}\frac{N_T[f]}{T} \|f- \sum_{p \in D} p \frac{N_T[f,p]}{N_T[f]} \|=0 \quad \mathbb{P}_{\mu,\sigma}\text{-a.s.}
\]

($\Leftarrow$) Conversely, suppose that \eqref{mpcproofgeneral} holds.
Fix $f \in F_0$. Using the triangle inequality,
\begin{equation*} 
 \frac{ N_T[f]}{T} \|f-    \overline{\omega}_T[f] \| \leq   \frac{N_T[f]}{T} \|f- \sum_{p} p \frac{N_T[f,p]}{N_T[f]} \|    +  \| \frac{1}{T}  \sum_{t=1}^T  \boldsymbol{1}_{\{f_t=f \}}  \bigl(\delta_{\omega_t}-p_t\bigr)   \|. 
\end{equation*}

The first term converges to zero by assumption, and the second
converges to zero by the same martingale argument as above.
Since $F_0$ is finite, summing over $f \in F_0$ yields
\[
\limsup_{T\to\infty}
\sum_{f\in F_0}
\frac{N_T[f]}{T}
\| f - \overline{\omega}_T[f] \|
=0
\quad
\mathbb P_{\mu,\sigma}\text{-a.s.}
\]

\subsection{Lemma \ref{stationaryergodic}}

\begin{lemma} \label{stationaryergodic}
For a stationary ergodic process $\mu$, the  distribution of conditionals  $C_\mu$ exists and is constant  almost surely.
\end{lemma}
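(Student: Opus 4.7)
The plan is to invoke Birkhoff's ergodic theorem on the indicator of each conditional value. I would first pass to the canonical two-sided stationary extension of the process on $\Omega^{\mathbb{Z}}$; the shift $T$ is measure-preserving and remains ergodic. On this extension, define the enlarged conditional $\tilde{p}_t := \mu(\cdot \mid \omega_s,\, s < t) \in \Delta \Omega$. Crucially, $(\tilde{p}_t)_{t \in \mathbb{Z}}$ is itself a stationary process, because $\boldsymbol{1}\{\tilde{p}_t = p\}$ is exactly $\boldsymbol{1}\{\tilde{p}_0 = p\}$ composed with $T^t$ for each fixed $p \in D$.

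For each $p \in D$, Birkhoff's ergodic theorem applied to the bounded function $f_p := \boldsymbol{1}\{\tilde{p}_0 = p\}$ gives, almost surely,
\[
\frac{1}{T}\sum_{t=1}^T \boldsymbol{1}\{\tilde{p}_t = p\} \;\longrightarrow\; \mathbb{E}_\mu[f_p] \;=\; \mu(\tilde{p}_0 = p),
\]
and ergodicity makes the right-hand side a deterministic constant. This already yields existence and almost-sure constancy of the limit for the \emph{infinite-past} version of the conditional process. Summing over $p \in D$ (a finite set) gives the full distribution $C_\mu$.

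It remains to bridge between $p_t$ (conditioned on the finite past $\omega_1,\dots,\omega_{t-1}$) and $\tilde{p}_t$ (conditioned on the whole past). By L\'evy's upward convergence theorem, the conditional probability on an increasing $\sigma$-algebra converges almost surely; since both $p_t$ and $\tilde{p}_t$ take values in the finite set $D$, convergence in probability forces $\boldsymbol{1}\{p_t = p\} = \boldsymbol{1}\{\tilde{p}_t = p\}$ for all but finitely many $t$, almost surely. Consequently the two Ces\`aro averages share the same limit, and $C_\mu(p) = \mu(\tilde{p}_0 = p)$ is well-defined and deterministic $\mu$-a.s. The main obstacle is precisely this bridging step: for the Markov and finite-memory cases cited in the excerpt it is immediate (the two conditionals coincide from some small $t$ onward), but in the fully general case it requires combining martingale convergence with the discreteness of $D$ to promote convergence in probability into eventual equality.
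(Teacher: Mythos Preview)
Your overall architecture---two-sided extension, martingale convergence on the backward conditionals, then an ergodic theorem---matches the paper's, but the bridging step has a genuine gap. The implication you rely on, that convergence in probability plus discreteness of $D$ forces $p_t=\tilde p_t$ for all but finitely many $t$ almost surely, is false. L\'evy's theorem applied at time $0$ gives $\mu(\omega_0=\cdot\mid\omega_{-n}^{-1})\to\tilde p_0$ a.s.; stationarity then tells you only that $p_t-\tilde p_t$ has the same \emph{law} as $\mu(\omega_0=\cdot\mid\omega_{-(t-1)}^{-1})-\tilde p_0$, and hence tends to $0$ in probability. That says nothing about the almost-sure behaviour of the diagonal sequence $(p_t-\tilde p_t)_{t\ge 1}$: the events $\{p_t\neq\tilde p_t\}$ have probabilities that need not be summable, so Borel--Cantelli gives no control. (Think of independent Bernoulli$(1/t)$ variables: discrete values, converge to $0$ in probability, yet equal $1$ infinitely often a.s.) Your appeal to L\'evy is to a different limit---fixed target, growing window---and does not transfer to the sequence indexed by $t$.

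The paper closes exactly this gap by replacing Birkhoff with Maker's generalised ergodic theorem. Writing $p_n=f_n\circ T^n$ with $f_n:=\mu(\omega_0=\cdot\mid\omega_{-n}^{0})\to f_\infty=\tilde p_0$ a.s., Maker yields directly
\[
\frac{1}{N}\sum_{n=0}^{N-1}\boldsymbol{1}_{\{p_n=p\}}
=\frac{1}{N}\sum_{n=0}^{N-1}\boldsymbol{1}_{\{f_n=p\}}\circ T^n
\;\longrightarrow\; \mathbb E\!\left[\boldsymbol{1}_{\{\tilde p_0=p\}}\right]
\quad\mu\text{-a.s.},
\]
which is precisely the device for a Ces\`aro average of a \emph{non-stationary} sequence of functions composed with the shift. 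To rescue your Birkhoff-then-bridge route you would still need $\tfrac{1}{T}\sum_t \boldsymbol{1}_{\{p_t\neq\tilde p_t\}}\to 0$ a.s., and proving that is itself an instance of Maker (take $g_n:=\boldsymbol{1}_{\{f_n\neq f_\infty\}}\to 0$); the step cannot be completed with martingale convergence and discreteness alone.
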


\begin{proof} Consider the two-sided extension of the stationary process so that  $\mu \in \Delta( \Omega^{\mathbb{Z}}).$  Let $\omega_a^b=(\omega_a,\ldots,\omega_{b-1})$ where $a < {b-1}$.  Define
\begin{equation*}
f_n := \mu(\omega_0=\cdot \mid \omega_{-n}^0), \qquad f_\infty := \mu(\omega_0=\cdot \mid \omega_{-\infty}^0).
\end{equation*}

By the martingale convergence theorem, $f_n \to f_\infty$ almost surely. By stationarity,
\[
p_n:=\mu(\omega_n=\cdot \mid \omega_0^n)= f_n\circ T^n.
\]
where $T$ is the  shift transformation.  Recall that $p_n$ takes values in the finite set $D\subset\Delta(\Omega)$; since $\mu$ is $T$-invariant and $p_n=f_n\circ T^n$, it follows that $f_n$ also takes values in $D$.

 Fix $p\in D$ and let $g_n=\mathbf{1}_{\{f_n=p\}}$ and $g_\infty=\mathbf{1}_{\{f_\infty=p\}}$. Since $f_n\to f_\infty$ almost surely and $D$ is finite, we
have $g_n\to g_\infty$ almost surely, and $\sup_n |g_n|\le 1\in L_1(\mu)$. Maker's
ergodic theorem  (see \citealp{Kallenberg2002FoMP})  implies
\[
\lim_{N\to\infty}\frac{1}{N}\sum_{n=0}^{N-1} g_n\circ T^n
= \mathbb E[g_\infty \mid \mathcal{I}]
\qquad \mathbb P_\mu\text{-a.s.},
\]
where $\mathcal I$ is the shift-invariant $\sigma$-algebra. Since
$g_n\circ T^n=\mathbf{1}_{\{p_n=p\}}$, this becomes
\[
\lim_{N\to\infty}\frac{1}{N}\sum_{n=0}^{N-1}\mathbf{1}_{\{p_n=p\}}
= \mathbb E[\mathbf{1}_{\{f_\infty=p\}}\mid\mathcal I]
\qquad  \mathbb P_\mu\text{-a.s.}
\]
If $\mu$ is ergodic then $\mathcal I$ is trivial, so the right-hand side equals
 $\mathbb E[\mathbf{1}_{\{f_\infty=p\}}]$, which is constant. Hence, for each $p\in D$, the limit
\[
C_\mu(p):=\lim_{N\to\infty}\frac{1}{N}\sum_{n=0}^{N-1}\mathbf{1}_{\{p_n=p\}}
\]
exists and is  constant almost surely.
\end{proof}

%

\subsection{Proof of Lemma \ref{mpclemmastationary}}

(i) First, we show that for a stationary ergodic process $\mu$ and any calibrated
strategy $\sigma$,  $ \mathcal F_{\mu,\sigma} \subseteq \mathcal{M}(C_\mu)$. For each period $T$, define the empirical frequencies
\[
N_T[p]= \sum_{t=1}^T \boldsymbol{1}_{\{ p_t=p\}},
\qquad
N_T[f]= \sum_{t=1}^T \boldsymbol{1}_{\{ f_t=f\}},
\qquad
N_T[f,p]= \sum_{t=1}^T \boldsymbol{1}_{\{ f_t=f,\, p_t=p\}}.
\]
Define the row-stochastic matrix 
\[ 
G_T[p,f]=\frac{N_T[f,p]}{N_T[p]} \ \text{ if } N_T[p]>0,
\]
and arbitrarily otherwise.

For any $f \in F$,
\begin{equation} \label{eq:mass}
 \sum_{p\in D} \frac{N_T[p]}{T} \cdot G_T[p,f]
 =
 \frac{N_T[f]}{T} ,
\end{equation}
 which establishes the mass preservation  \eqref{smpc} in Definition \ref{defsmpc}.
 
 It remains to verify  mean preservation.  By the triangle inequality, for any  $f \in F$,
 \[
\frac{N_T[f]}{T} 
 \Bigl\|f-\sum_{p\in D}p\,G_T[p,f]\Bigr\| \le
\frac{N_T[f]}{T} \|f-\overline\omega_T[f]\|
 +
\frac{N_T[f]}{T}  \Bigl\|\overline\omega_T[f]
 -\sum_{p\in D}p G_T[p,f]\Bigr\|.
 \]
 The first term converges to zero almost surely by calibration. The second term can be rewritten as an average of bounded martingale differences
\begin{equation*}
\|\frac{1}{T} \sum_{t=1}^T X_t\| \quad \text{ where }   X_t =\boldsymbol{1}_{\{ f_t=f\}} \big(\delta_{\omega_t}-p_t \big).
\end{equation*}

 Hence, by the strong law of large numbers for bounded martingale differences, the second term also converges to zero. Overall, this shows that the mean-preserving constraint
 \eqref{smpc2} holds up to an error that vanishes as $T\to\infty$.

Finally, since $\mu$ is stationary and ergodic, the empirical frequencies of conditionals converge almost surely to $C_\mu$ (Lemma~\ref{stationaryergodic}). Let $Q \in \mathcal F_{\mu,\sigma}$ be any limit point of the empirical frequencies of forecasts. Then there exists a subsequence $(T_k)$ along which the empirical  frequencies of forecasts converge to $Q$. Along this subsequence, the empirical joint frequencies of conditionals and forecasts satisfy the simple mean-preserving contraction conditions up to an error that vanishes as $k\to\infty$. Passing to the limit along $(T_k)$ and using that the empirical frequencies of conditionals converge to $C_\mu$, it follows that $Q$ is a limit of simple mean-preserving contractions of $C_\mu$. Since $\mathcal M(C_\mu)$ is defined as the closure of the set of simple mean-preserving contractions, we conclude that $Q \in \mathcal M(C_\mu)$.

(ii) Next, we show that if $\mu$ is stationary and ergodic, then for any
$Q\in\mathcal{M}(C_\mu)$ with finite support, there exists a calibrated strategy $\sigma$ that
induces $Q$, that is, $F_{\mu,\sigma}=Q$ and  fails the calibration test with expected long-run frequency zero.

  Recall that the informed sender knows the  distribution of conditionals $C_\mu$. Fix $Q \in \mathcal{M} (C_\mu)$ and let $\pi$ denote  the  signaling policy that implements $Q$. The informed sender can compute the  conditional  $p_t = \mu(\cdot \mid \omega^{t})$ at the start of each period $t$. Define the  forecasting strategy $\tilde \sigma$  by  $\tilde \sigma_t(f_t =f \mid p_t=p)= \pi(f\mid p)$ for all $t$. We  show the  strategy $ \tilde \sigma$ is calibrated. For any $f \in \mathrm{Supp}(Q)$,
 \begin{align*}
 	\frac{N_T[f]}{T}\|\overline{\omega}_T[f]-f\|
 	&\le
 	\left\|\frac1T\sum_{t=1}^T \mathbf 1_{\{f_t=f\}}(\delta_{\omega_t}-p_t)\right\|
 	+\left\|\frac1T\sum_{t=1}^T(\mathbf 1_{\{f_t=f\}}-\pi(f\mid p_t))(p_t-f)\right\| \\
 	&\quad+
 	\left\|\frac1T\sum_{t=1}^T \pi(f\mid p_t)(p_t-f)\right\|.
 \end{align*}
 
The first two terms are averages of bounded martingale differences and therefore
converge to zero almost surely. By stationarity and ergodicity, the third term converges almost surely to $\sum_{p} C_\mu(p)\pi(f\mid p)(p-f)$, which equals zero since $\pi$ implements $Q\in\mathcal M(C_\mu)$. 

Summing over the finite $f\in \mathrm{Supp}(Q)$, we conclude that $\tilde \sigma$ is calibrated. Moreover,  by Assumption \ref{assumerror}, it fails the  test only in expected long-run frequency zero of periods.

Finally, we show that  $\tilde \sigma$ induces the distribution of forecasts $Q$. Given
$f\in \mathrm{Supp}(Q)$, 
\[
\frac1T\sum_{t=1}^T \mathbf 1_{\{f_t=f\}}
=\frac1T\sum_{t=1}^T \pi(f\mid p_t)
+\frac1T\sum_{t=1}^T\bigl(\mathbf 1_{\{f_t=f\}}-\pi(f\mid p_t)\bigr).
\]
The second term goes to zero since it is an average of bounded martingale differences. Moreover,
\[
\frac1T\sum_{t=1}^T \pi( f \mid p_t)
\to \sum_{p\in D} C_\mu(p)\pi(f\mid p)=Q(f),
\]
where the convergence follows from Lemma \ref{stationaryergodic} and the equality holds because $\pi$ implements $Q$. Hence,
$N_T[f]/T \to Q(f)$ for all $f \in \mathrm{Supp}(Q)$, and therefore
$F_{\mu,\tilde\sigma}=Q$.

\subsection{Proof of Theorem \ref{maxattainable} }

We first show that the function $p \mapsto \overline{\mathrm{Co}}\,\hat{u}_S(f^*(p))$ is attainable. We then show that it is the maximum attainable function.

The main idea of the proof is to combine the sender's payoff and  calibration cost  to form a vector-valued payoff. We then identify a suitable closed and convex target set that captures the definition of attainable function (a similar approach appears in \citealp{mannorTsitsiklisYu}).  Finally, we apply the dual condition of approachability to demonstrate that the set is approachable, ensuring convergence regardless of nature's strategy.

Given a forecast $f \in F_\epsilon$ and state $\omega \in \Omega$, the calibration cost vector is given by
\begin{equation}
c(f,\omega)=(\boldsymbol{0},\ldots,f- \delta_{\omega},....,\boldsymbol{0}) \in \mathbb{R}^{|F_\epsilon||\Omega|}.
\end{equation}

It is  a vector of $|F_\epsilon|$ elements of size $\mathbb{R}^{|\Omega|}$ with one non-zero element (at the position for $f$) while the rest are equal to $\boldsymbol{0} \in \mathbb{R}^{|\Omega|}$.

In each period $t$, the sender and nature simultaneously choose $f_t \in F_\epsilon$ and $\omega_t \in \Omega$ respectively, resulting  in   payoff $r_t=\hat{u}_S(f_t)$ and calibration cost $c_t=c(f_t,\omega_t)$. The $\epsilon$-calibration condition \eqref{calibasymdef} can be rewritten as follows:  the  average of the sequence of  vector-valued calibration costs $c_t=c(f_t,\omega_t)$ converges almost surely to the set $\mathcal{C}$, where\footnotemark  
$$\mathcal{C}= \{ c \in \mathbb{R}^{|F_\epsilon| |\Omega|} : \sum_{f \in F_\epsilon} \|\underline{c}_f \| \leq \epsilon \}. $$ 

\footnotetext{ Formally, this implies    $\limsup_{T \to \infty} \mathrm{dist}(\frac{1}{T}\sum_{t=1}^T c(f_t,\omega_t), \mathcal{C}) \to 0$ almost surely,  where $\mathrm{dist}(.)$ is the Euclidean distance.}


We now show the function $ p \mapsto \overline{\mathrm{Co}} \,\hat{u}_S(f^*(p))$ is attainable. For each  period $t$, consider the vector-valued  payoff
\begin{equation}
m(f_t,\omega_t):= (r_t,c_t, \delta_{\omega_t}) \in  \mathbb{R} \times \mathbb{R}^{|F_\epsilon||\Omega|} \times \Delta (\Omega).
\end{equation}

Define the sets:
\begin{equation*}
\mathcal T_1 =\{(r,c,p) : r \geq \overline{\mathrm{Co}}  \, \hat{u}_S(f^*(p)) \}, \quad	\mathcal  T_2 =\{(r,c,p) : c \in \mathcal{C} \},
\end{equation*}
and let  $\mathcal T = \mathcal T_1 \cap \mathcal T_2$ denote the target set. The set $\mathcal T$ is closed and convex.

To show that the average of $m(f_t,\omega_t)$ approaches $\mathcal T$, it suffices to
verify the dual condition of approachability \citep{blackwell1956analog},
namely that for every $p\in\Delta(\Omega)$ there exists
$x\in\Delta(F_\epsilon)$ such that $m(x,p)\in \mathcal{T}$, where
\begin{equation}\label{dual}
	m(x,p)
	=\Big(
	\sum_{f\in F_\epsilon} x(f)\hat{u}_S(f),\,
	\sum_{f\in F_\epsilon}\sum_{\omega\in\Omega} x(f)p(\omega)c(f,\omega),\,
	p
	\Big).
\end{equation}

Fix \( p \in \Delta(\Omega) \) and choose
\(
x = \delta_{f^*(p)} \in \Delta(F_\epsilon),
\)
that is, the degenerate distribution that assigns probability one to the
forecast \( f^*(p) \in F_\epsilon \). We verify that this choice satisfies
the dual condition.

With this choice, the  sender's payoff equals $\hat u_S(f^*(p))$. From the definition of closed convex hull, it follows that 
$m(x,p)\in \mathcal T_1$. Moreover, $\sum_{\omega} p(\omega)c(f^*(p),\omega)
=(\boldsymbol{0},\dots,f^*(p)-p,\dots,\boldsymbol{0}),$
so the calibration error equals $\|f^*(p)-p\|\le \epsilon$, and hence
$m(x,p)\in \mathcal T_2$. Therefore $m(x,p)\in \mathcal T$, the dual condition holds, and the set $\mathcal T$
is approachable.


We now show that $\overline{\mathrm{Co}}\,\hat u_S(f^*(p))$ is the maximum attainable payoff. The idea is to construct a strategy for nature that, in each block, draws states i.i.d. according to a fixed distribution. The only way for the sender to remain $\epsilon$-calibrated is then to forecast consistently with that distribution in every block.

  Let $(p_j)_{j=1}^k$ denote the support points of  the closed convex hull, i.e., $\overline{Co}  \, \hat{u}_S(f^*(p))= \sum_{j=1}^k \alpha_j \hat{u}_S(f^*(p_j))$, where $p=\sum_{j=1}^k \alpha_j p_j.$ By Carathéodory's theorem, we can take $k\le |\Omega|+1$.

	Fix $T$ large and partition the $T$ periods into $k$ consecutive blocks, indexed by $j=1,\dots,k$, of lengths $\alpha_j T$.
	
	In block $j$, nature draws states i.i.d. from a distribution $x_j$ chosen so that $f^*(p_j)$ is the only grid point within distance $\epsilon$ of $x_j$. This is possible because $F_\epsilon$ is finite, so we can pick $x_j$ strictly inside the region where $f^*(p_j)$ is the closest grid point. For $T$ large, the empirical distribution of states in block $j$ is close to $x_j$. If the sender forecasts $f^*(p_j)$ throughout the block, the calibration error in that block is below $\epsilon$ for large $T$. In contrast, if she uses some forecast $f \neq f^*(p_j)$ on a positive fraction of periods, then the calibration error in that block exceeds $\epsilon$ for $T$ sufficiently large.
		
		Therefore, in each block $j$, the sender must forecast $f^*(p_j)$ on all but a vanishing fraction of periods. Summing across blocks, the sender's average payoff cannot exceed $\sum_{j=1}^k \alpha_j \hat{u}_S(f^*(p_j))	= \overline{\mathrm{Co}}\,\hat u_S(f^*(p))$.

\subsection{Proof of Proposition \ref{uninformedstationary}}

We use the notion of opportunistic approachability, introduced in
\citet{bernsteinmanorshimkin}. The strategies proposed there not only ensure approachability of the target set in the adversarial case, but also exploit restrictions in nature's play to approach smaller subsets when such restrictions arise, either in a statistical or an empirical sense; we focus on the latter. 

\begin{definition}[\cite{bernsteinmanorshimkin}]
	The play of nature is \textit{empirically $Q$-restricted} with respect to a partition
	$\{l_m\}$, if there exists a convex subset $Q \subseteq  \Delta (\Omega)$ such that, 
	\[
	\lim_{M \rightarrow \infty} \frac{1}{n_M} \sum_{m=1}^M l_m \,\text{dist}(\overline{\omega}_m, Q )= 0, \quad \mathbb{P}_{\mu}\text{-a.s. }
	\]
	where $l_m$ denotes the length of block $m$, $n_M =\sum_{m=1}^M l_m$ denotes the time at
	the end of block $M$, and $\overline{\omega}_m$ denotes the empirical distribution of states
	in block $m$.
\end{definition}

It follows from Theorem~5 in \citet{bernsteinmanorshimkin} that when nature's play is
empirically $Q$-restricted with respect to a partition that grows subexponentially,
the sender can approach the set
\[
R^+(Q):= \bigcap_{\epsilon >0}
\mathrm{Co}\Big\{ \hat{u}_S(f^*(q)):\ d(q,Q) \leq \epsilon \Big\}.
\]
In words, $R^+(Q)$ corresponds to the closed convex image of the indirect utility
restricted to the set $Q$. The fully adversarial case corresponds to
$Q=\Delta(\Omega)$.

By assumption, the order-$k$ Markov chain is such that the augmented chain on
$\Omega^k$ is irreducible and aperiodic, and therefore admits a unique invariant
distribution $p\in\Delta(\Omega)$. Fix $\eta>0$ and let $Q:=N_\eta(p)$. Then
 concentration inequalities for  Markov chains imply that there exist constants \(C_\eta,c_\eta>0\) such that for every block length \(l\),
\begin{equation*}
	\mathbb P \left(\text{dist}(\overline{\omega}_{l},p)>\eta\right)
	\le C_\eta e^{-c_\eta l },
\end{equation*}
see e.g., \citet{lezaud1998} or \citet{paulin2015}.

Choose a partition into consecutive blocks with lengths \(l_m = m^\nu\) for some \(\nu>0\). Define the event $E_m:=\{\overline{\omega}_m\notin Q\}$. By stationarity of the chain, for each $m$ the empirical distribution $\overline{\omega}_m$ has the same law as that of a block of length $l_m$, and hence 
\[
\mathbb P(E_m)\le C_\eta e^{-c_\eta m^\nu}.
\]
We have \(\sum_{m}\mathbb P(E_m) \leq \sum_{m} e^{-c_\eta m^\nu}<\infty\),
and the Borel--Cantelli lemma implies that almost surely only finitely many \(E_m\) occur. Therefore, nature's play is empirically $Q$-restricted with respect to the partition $(l_m)_{m\ge1}$. Since $(l_m)$ grows subexponentially, the conclusion of Theorem~5 of \citet{bernsteinmanorshimkin} applies.

Finally, choose $\eta>0$ such that $f^*(q)=f^*(p)$ for all $q\in N_\eta(p)$ (which holds whenever $p$ is not on the grid  $F_\epsilon$, a generic condition). Under this choice, $R^+(Q)=\{\hat u_S(f^*(p))\}$, so the sender's long-run average payoff converges to $\hat u_S(f^*(p))$, completing the proof.

\subsection{Proof of Proposition \ref{calibnimpliesnoreg}}

 Fix any calibrated forecasting strategy and suppose the receiver best responds to each forecast, that is, $a_t=\hat{a}(f_t)$ for all $t \in \mathbb{N}$. If a forecast $f$ is sent only with vanishing frequency, in the sense that $\limsup_{T\to\infty} N_T[f]/T = 0$, then the regret condition holds trivially. We therefore restrict attention to forecasts that are sent with positive frequency in the long run. 	For any such forecast $f$, the receiver's regret is 
	\begin{equation*}
		\limsup_{T \to \infty}  \frac{N_{T}[f]}{T} \Big(\max_{a^* \in A} u_R( \overline{\omega}_T[f],a^*)- u_R( \overline{\omega}_T[f],\hat{a}(f)) \Big).
	\end{equation*}
	By calibration, the empirical distribution  $\overline{\omega}_T[f]$ converges to $f$ almost surely. Hence, the expression above equals
	\begin{equation*}
		\limsup_{T \to \infty}  \frac{N_{T}[f]}{T} \Big(\max_{a^* \in A} \sum_{\omega \in \Omega } f(\omega)[u_R( \omega,a^*) - u_R( \omega,\hat{a}(f))] \Big) \leq 0. 
	\end{equation*}
	This follows as $\hat{a}(f)$ is the receiver's best response given belief $f$.

\subsection{Proof of Proposition \ref{noregretQ}}

Fix $Q \in \mathcal{M}(C_\mu)$ and let $\pi$ denote a signaling policy that
implements $Q$ in the persuasion problem with prior $C_\mu$. By Lemma
\ref{mpclemmastationary}, when $\mu$ is stationary and ergodic, the forecasting strategy defined by $\sigma_t(f_t=f\mid p_t=p)=\pi(f\mid p)$ is calibrated and induces the distribution of forecasts $F_{\mu,\sigma}=Q$. We will show that under the calibrated forecasting strategy $\sigma$, for every forecast  $f$ that is sent with positive long-run frequency, any no-regret learning strategy plays a best response to $f$ with long-run frequency one.

 The receiver's regret conditional on forecast $f$ is given by
 \begin{equation*}
 	\limsup_{T \to \infty} \max_{a^* \in A} \frac{ N_T[f]}{T} \Big(u_R(\overline{\omega}_T[f],a^*)  - \overline{u}_{R,T}[f]\Big),   
 \end{equation*}
 which is non-positive under any no-regret learning strategy.

For  any forecast $f$ sent with positive long-run frequency, calibration implies that the empirical distribution of states conditional on $f$ converges to $f$. Consequently, if the receiver plays any action $a\in A$ in these periods, 
\[
\overline{u}_{R,T}[f]
=
u_R(\overline{\omega}_T[f],a)
\,\to \,
\sum_{\omega\in\Omega} f(\omega)\,u_R(\omega,a).
\]

	Suppose, towards a contradiction, that the receiver plays actions that are not best
	responses to $f$ with positive long-run frequency among periods in which the forecast is $f$. Since the action set $A$ is finite, there exists some action
$ a$ that is not a best response to $f$ and is played with some positive
	frequency among those periods. Then, by the convergence above and the optimality of $\hat a(f)$, the average payoff loss from playing $ a$ instead of $\hat a(f)$ is strictly positive, which contradicts the no-regret property.

Therefore, for every forecast $f$ that occurs with positive long-run frequency, any no-regret
strategy plays a best response to $f$ in all but a vanishing fraction of the periods in which $f$
is sent. Thus, for a stationary ergodic process, the sender can guarantee the average payoff
corresponding to any distribution of forecasts $Q \in \mathcal{M}(C_\mu)$. In particular, she can
guarantee the persuasion value $\mathrm{Per}(C_\mu,\hat u_S)$

\section{Persuasion problem: Blackwell experiments} \label{secPerBlackExp}

In this section, we define an equivalent way of describing the persuasion problem in terms of Blackwell experiments. This is the standard way of formulating the persuasion problem in the case of a perfectly informed sender  \citep{BPKamenic}. We extend it to the case where the sender is imperfectly informed and can only use experiments less informative (à la Blackwell) than a given experiment. We assume the sets $\Omega$, $S$ and $M$ are finite.

\begin{definition}
	An experiment $F:\Omega \to \Delta (M)$ is a garbling of the experiment $E:\Omega \to \Delta (S)$ if there exists a row-stochastic matrix (or mapping) $G: S \to \Delta (M) $ such that $EG=F$.
\end{definition}

This defines a partial ordering in the set of Blackwell experiments. We say $F \precsim E$ when experiment $F$ is a garbling of experiment $E$.  A prior belief $p_0 \in \Delta (\Omega)$ and an experiment $E:\Omega \to \Delta (S)$ give rise to a probability distribution $\mathcal{P}(p_0, E):=(\lambda_s, q_s)_{s \in S}$. Conversely, given any probability distribution $Q=(\mu_j, r_j)_{j=1}^m$ you can define a prior belief  $m(Q)$ and an experiment $\mathcal{E}(Q): \Omega \rightarrow \Delta (\tilde S)$ where $\tilde S=\{s_1,\ldots, s_m\}$.  
\begin{align*}
	\lambda_s &= \sum_{\omega \in \Omega} p_0(\omega)E(s \mid \omega),& m(Q)  &= \sum_{j=1}^m \mu_j r_j, \\
	q_s(\omega)&= \frac{p_0(\omega) E(s \mid \omega)}{\sum_{\omega \in \Omega}p_0(\omega) E(s \mid \omega)},  & \mathcal{E}(Q)(s_j \mid \omega)&=\frac{\mu_j r_j(\omega)}{\sum_{j=1}^m \mu_j r_j(\omega)}.
\end{align*}

The following proposition shows the equivalence between the simple mean-preserving contraction of a probability distribution (\citealp{EltonHillFusion}, \citealp{WhitmeyerJoseph2019MoMC}) and the garbling of an experiment (\citealp{BlackwellDavid1953ECoE}) with a fixed prior belief.

\begin{proposition} \label{garbequivalence}
	A  probability distribution $Q$ is a simple mean-preserving contraction of $P$ if and only if $m(P)=m(Q)$ and   $\mathcal{E}(Q)$ is a garbling of $\mathcal{E}(P)$.
	\begin{equation*}
		\begin{tikzcd}
			&P \arrow[rr, no head, dashed] \arrow[d, "G"'] && (p_0,E) \arrow[d, "G"] \\
			&Q \arrow[rr, no head, dashed] && (p_0, EG)
		\end{tikzcd}
	\end{equation*}
\end{proposition}

\begin{proof}  ($\Rightarrow$) Assume probability distribution $Q$ is a  mean-preserving contraction of $P$, where $|\mathrm{Supp}(P)|=n$ and $|\mathrm{Supp}(Q)|=m$. First, we show that the mean of the two distributions is equal. 
	\begin{align*}
		m(Q)&= \sum_{j=1}^m \mu_j r_j, \\
		&=\sum_{j=1}^m \sum_{i=1}^n \lambda_i q_i G_{ij} = \sum_{i=1}^n \lambda_i q_i = m(P). 
	\end{align*}
	
	Now, we show the resulting experiment $F$ is a garbling of $E$.
	\begin{align*}
		F(t_j \mid \omega) & = \frac{\mu_j r_j(\omega)}{\sum_{j=1}^m \mu_j r_j (\omega)},  \\
		& = \frac{\sum_{i=1}^n \lambda_i q_i (\omega) G_{ij}}{\sum_{i=1}^n \lambda_i q_i(\omega)} = \sum_{i=1}^n E(s_i \mid \omega) G_{ij}.
	\end{align*}
	
	($\Leftarrow$) Assume the Blackwell experiment $F$ is a garbling of $E$. We show that the resulting distribution $Q$ is a simple mean-preserving contraction of $P$.  
	\begin{align*}
		\mu_j  &= \sum_{\omega \in \Omega}  p_0(\omega) F(t_j \mid \omega),\\
		&= \sum_{\omega \in \Omega}  p_0(\omega) \sum_{i=1}^n E(s_i \mid \omega) G_{ij}=  \sum_{i=1}^n \lambda_i G_{ij}. 
	\end{align*}

	So, it satisfies \eqref{smpc}. Next, for any $\omega \in \Omega$, we have
	\begin{align*}
		\mu_j r_j (\omega) &=  p_0(\omega) F(t_j \mid \omega),\\
		&=  p_0(\omega) \sum_{i=1}^n E(s_i \mid \omega) G_{ij} = \sum_{i=1}^n \lambda_i q_i (\omega) G_{ij}.
	\end{align*}
	Hence, it satisfies \eqref{smpc2}. Thus, we have $Q \in \mathcal{M}(P)$.
\end{proof}

Thus, solving the persuasion problem with prior $P$ and indirect utility $\hat{u}_S$  is  equivalent to finding the optimal garbling of the experiment $\mathcal{E}(P)$ given prior $m(P)$: 
\begin{equation} \label{persuasionsenderiment}
	\max_{F \precsim \mathcal{E}(P)}  \mathbb{E}_{Q}[\hat u_S]  \text{ where } Q = \mathcal{P}(m(P),F). 
\end{equation}

A special case arises when the sender is perfectly informed, that is,
$\mathrm{Supp}(P)=\{\delta_\omega : \omega \in \Omega\}$. In this case, $\mathcal{E}(P)$ is the fully informative experiment and $m(P)$ coincides with the  prior belief. The problem then reduces to the canonical Bayesian persuasion problem of \citet{BPKamenic}, in which any distribution of posterior beliefs satisfying Bayes plausibility can be implemented.

Note  that the optimization problem \eqref{persuasionsenderiment} is in terms of constrained Blackwell experiments while the  equivalent  problem \eqref{persuasiongarble} is in terms of mean-preserving contractions. 

\end{appendix}

\end{document}